\newtheorem{theorem}{Theorem}
\newtheorem{observation}[theorem]{Observation}
\theoremstyle{definition}
\newtheorem{claim}[theorem]{Claim}
\newcommand{\set}[1]{\left\{ #1 \right\}}
\newcommand{\cset}{{\mathcal{C}}}
\newcommand{\sset}{{\mathcal{S}}}
\newenvironment{properties}[2][0]
{
	\begin{enumerate} \setcounter{enumi}{#1}}{\end{enumerate}}
\newcommand{\dist}{\textnormal{\textsf{dist}}}
\newcommand{\ball}{\mathsf{B}}
\def\card#1{\left| #1 \right|}
\newcommand{\cut}{\mathsf{cut}}
\newcommand{\D}{\mathsf{D}}
\newcommand{\supp}{\mathsf{supp}}
\newcounter{note}
\begin{document}

	\begin{titlepage}
		
		\title{Towards the Characterization of Terminal Cut Functions:\\ a Condition for Laminar Families}

		%\iffalse
		\author{Yu Chen\thanks{EPFL, Lausanne, Switzerland. Email: {\tt yu.chen@epfl.ch}. Supported by ERC Starting Grant 759471.} \and Zihan Tan\thanks{Rutgers University, NJ, USA. Email: {\tt zihantan1993@gmail.com}. Supported by a grant to DIMACS from the Simons Foundation (820931).}} 
		%\fi
		
		\maketitle

		\thispagestyle{empty}
		\begin{abstract}
We study the following characterization problem. 
Given a set $T$ of terminals and a $(2^{|T|}-2)$-dimensional vector $\pi$ whose coordinates are indexed by proper subsets of $T$, is there a graph $G$ that contains $T$, such that for all subsets $\emptyset\subsetneq S\subsetneq T$, $\pi_S$ equals the value of the min-cut in $G$ separating $S$ from $T\setminus S$?
The only known necessary conditions are submodularity and a special class of linear inequalities given by Chaudhuri, Subrahmanyam, Wagner and Zaroliagis.

Our main result is a new class of linear inequalities concerning laminar families, that generalize all previous inequalities.
Using our new class of inequalities, we can prove a generalization of Karger's approximate min-cut counting result to graphs with terminals.		
		\end{abstract}
	\end{titlepage}

\section{Introduction}

The main idea behind vertex sparsification is that, although graphs are large, the vertices of interest to
us (referred to as \emph{terminals}) are potentially quite few. This enables the compression of the input
graph into a more compact form while preserving certain cut, flow, and distance information with respect to the terminals. In the past decades, much progress has been made in exploring the limit of vertex sparsification 
\cite{hagerup1998characterizing,moitra2009approximation,leighton2010extensions,charikar2010vertex,chuzhoy2012vertex,kratsch2012representative,krauthgamer2013mimicking,andoni2014towards,khan2014mimicking,krauthgamer2017refined,karpov2017exponential,goranci2017improved,chen2020fast,goranci2021expander,chang2022almost,krauthgamer2023exact,chen20231}. These results furthered our understanding of graph structures and also led to several interesting algorithmic implications.

In this work we study the following basic characterization problem on vertex sparsification for cut structures:
Given a set $T$ of terminals and a $(2^{|T|}-2)$-dimensional vector $\pi$ whose coordinates are indexed by proper subsets of $T$, is there a graph $G$ that contains $T$, such that for all subsets $S\subseteq T$, $\pi_S$ equals the value of the min-cut in $G$ separating $S$ from $T\setminus S$? (We call those vectors $\pi$ with a \textsf{YES} answer \emph{realizable}.)

Despite extensive work on vertex sparsification for cuts, our understanding on this basic problem remains limited. The only folklore result, which is a necessary condition here, is that \emph{terminal cut functions are submodular}.
Specifically, let $G$ be a graph and let $T$ be a set of its vertices designated as \emph{terminals}.
For each subset $\emptyset\subsetneq S\subsetneq T$, let $\cut_G(S)$ be the value of the min-cut in $G$ separating $S$ from $T\setminus S$.
Then for any $S,S'\subseteq T$, 
\[\cut_G(S)+\cut_G(S')\ge \cut_G(S\cap S')+\cut_G(S\cup S').\]
(Here $\cut_G(\cdot)$ is called the \emph{terminal cut function} of $G$. Note that the terminal cut function can also be viewed as a $(2^{|T|}-2)$-dimensional vector $(\cut_G(S))_{\emptyset \subsetneq S\subsetneq T}$.)

%Given a weighted graph $G$, and a set of terminals $T$, we call all other vertices as steiner nodes. A terminal cut $C=(S,T-S)$ is a partition of the terminal set, and the terminal cut size $S(C)$ is the weight of minimum cut of $G$ that seperates the terminals in $S$ and $T-S$. Let $S(T)$ be the minimum terminal cut size for all partition. We prove the following result:

It was observed by Khan, Raghavendra \cite{khan2014mimicking} that the set of all realizable vectors  forms a \emph{convex cone}\footnote{Specifically, if graph $G$ realizes vector $\pi$ and graph $G'$ realizes vector $\pi'$, then for any pair of real numbers $a,b>0$, we can scale the edge capacities of $G,G'$ (by $a,b$, respectively) and then identify the corresponding terminals to obtain a graph that realizes vector $a\pi+b\pi'$.}. Therefore, from basic results in linear programming, we are indeed able to characterize the set of realizable vectors by a system of (possibly infinitely many) homogeneous linear inequalities.
For example, the submodularity condition can be written as: for all subsets $S,S'\subseteq T$, $\pi_S+\pi_{S'}\ge \pi_{S\cap S'}+\pi_{S\cup S'}$.
%
\iffalse
Suppose $\cset$ is a weighted set of the terminal cuts in the graph. For any cut $C$ in the graph, let $w_{\cset}(C)>0$ be the weight of the set $C$ in $\cset$. Define $w(\cset) = \sum_{C \in \cset} w_{\cset}(C) \cdot S(C)$ as the total weight of the cuts in $\cset$. We say $\cset$ is a laminar family if for any $C_1,C_2 \in \cset$, $C_1$ and $C_2$ cut the terminal set into at most $3$ pieces.

We prove the following observation:

\begin{observation} \label{obs:laminar}
    Given a laminar family of termianl cuts $\cset$, the small side (if equal, then arbitrary select one) of the terminal cuts formal a laminar family of sets.
\end{observation}

\begin{proof}
    Let $S_1$ and $S_2$ be two sets that are the smaller sides of two terminal cuts. Since these two cuts cut the terinal sets into at most three pieces, there are only three possiblilty about $S_1$ and $S_2$: they are disjoint, one contains another, or the union of them is the whole terminal set. However, since both of them have size at most $\card{T}/2$, the third one is impossible unless they are also disjoint, thus $S_1$ and $S_2$ are either disjoint or one contains another.
\end{proof}

For any two terminals $u,v \in T$, we define $D_{\cset}(u,v) = \sum_{C \in \cset : C \text{ cuts } u,v} w_\cset(C)$. Note that $D_{\cset}$ satisfies triangle inequality, and if there is no steiner nodes, $\sum_{u,v} D_{\cset}(u,v) = w(\cset)$.

\fi
To formally define these inequalities, we say $\beta$ is a \emph{type vector}, iff $\beta$ is a $(2^{|T|}-2)$-dimensional nonnegative real vector with its coordinates indexed by the proper subsets of $T$. That is, $\beta=(\beta_S)_{\emptyset\subsetneq S\subsetneq T}$ where $\beta_S\ge 0$ for all $\emptyset\subsetneq S\subsetneq T$. 
Then a homogeneous linear inequality is simply $\langle \beta,\pi\rangle\ge \langle \gamma,\pi\rangle$, for a pair $\beta,\gamma$ of type vectors.
Now the question becomes: 
\[\emph{Which inequalities characterize all terminal cut functions, or equivalently,  all realizable vectors?}\]

Let us make some observations. First, for an inequality $\langle \beta,\pi\rangle\ge \langle \gamma,\pi\rangle$ to be valid (that is, satisfied by all terminal cut functions), any pair of terminals cannot be ``more separated'' in RHS than in LHS.
Formally, we say that a subset $S\subseteq T$ \emph{separates} a pair $t,t'$ of terminals, iff exactly one of $t,t'$ belongs to $S$. Now if for some pair $t,t'$, RHS has a term $\pi_S$ with $S$ separating $t,t'$, while LHS does not contain such a term, then the inequality cannot be correct. This is because, if we consider the graph $G$ on $T$ with only a single edge $(t,t')$, then the value of LHS is $0$ and the value of RHS is strictly positive.

To make this observation more general, for a type vector $\beta$, we define $\D_\beta$, the \emph{metric on $T$ induced by $\beta$}, as follows. For every pair $t,t'$ of terminals,
\[\D_\beta(t,t')=\sum_{S}\beta_S\cdot\mathbbm{1}[S \text{ separates } t,t'].\]
Note that $\D_\beta(\cdot,\cdot)$ is indeed a valid metric on $T$ as it is the positive combination of a collection of cut metrics on $T$.
Then this ``terminals no more separated in RHS than in LHS'' condition is essentially requiring that \emph{metric $\D_{\beta}$ dominates $\D_{\gamma}$}, that is, for every pair $t,t'\in T$, $\D_{\beta}(t,t') \ge \D_{\gamma}(t,t')$.

%\znote{metric dominates: sufficient and necessary for no-Steiner cut}

However, this condition alone is necessary but not sufficient. For example, when set $T=\set{1,2,3,4}$, type vector $\beta$ is such that $\beta_S=1/2$ for all doubleton subsets $S$ of $T$, and type vector $\gamma$ is such that $\gamma_S=1/2$ for all singleton subsets $S$ of $T$. Is one of $\langle \beta,\pi\rangle$ and $\langle \gamma,\pi\rangle$ always greater than the other?
Note that both $\D_{\beta}$ and $\D_{\gamma}$ are uniform metrics on $T$, so they dominate each other and we cannot refute either possibility, and their $\ge$/$\le$ relation cannot be deduced from submodularity, either.

Chaudhuri, Subrahmanyam, Wagner and Zaroliagis gave an answer by proving that $\langle \beta,\pi\rangle\ge \langle \gamma,\pi\rangle$ always holds in the above example.
%Define the \emph{support} of a type vector $\beta$, denoted by $\supp(\beta)$, as the collection of subsets $S$ with $\beta_S>0$. 
Formally, they showed in \cite{ChaudhuriSWZ00} that, if $\D_\beta$ dominates $\D_\gamma$ and $\gamma$ only has non-zero values at coordinates indexed by singleton sets, then
$\langle \beta,\pi\rangle\ge \langle \gamma,\pi\rangle$ always holds.
%They then used this result to show size upper bound on exact cut sparsifiers for $4$-terminal and $5$-terminal networks. 
These inequalities remain as the only known necessary constraints for terminal cut functions besides submodularity.

%\znote{mention ``realization by hypergraph cut'' results}

\subsection{Our Results}

Our main result is a new class of constraints for terminal cut functions, which generalizes both the result in \cite{ChaudhuriSWZ00} and the submodularity constraints \footnote{We discuss why they are indeed generalizations in \Cref{sec: discussion}.}.
For a type vector $\beta$, we define its \emph{support}, denoted by $\supp(\beta)$, as the collection of subsets $S$ with $\beta_S>0$. For any graph $G$, we define
\[\cut_G(\beta)=\sum_S\beta_S\cdot \cut_G(S).\]
Our main result is summarized as the following theorem.

\begin{theorem} \label{lem:cover}
Let $G$ be an edge-capacitated graph, $T$ be a set of its vertices, and $\beta,\gamma$ be type vectors. If (i) $\supp(\gamma)$ is a laminar family; and (ii) metric $\D_{\beta}$ dominates metric $\D_{\gamma}$ (that is, for every pair $t,t'\in T$, $\D_{\beta}(t,t') \ge \D_{\gamma}(t,t')$); 
then $\cut_G(\beta)\ge \cut_G(\gamma)$.
\end{theorem}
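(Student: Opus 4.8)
The plan is to prove the inequality by a direct combinatorial argument that decomposes $\cut_G(\gamma)$ using the laminar structure of $\supp(\gamma)$ and then charges it against $\cut_G(\beta)$ via the domination of metrics. For each $S\in\supp(\gamma)$, fix a minimum cut $(X_S, V\setminus X_S)$ in $G$ separating $S$ from $T\setminus S$, so $\cut_G(S)$ equals the capacity of the edges crossing this cut, which in turn equals $\tfrac12\sum_{e=(u,v)}c(e)\cdot\mathbbm{1}[e\text{ crosses }(X_S,V\setminus X_S)]$. The key point is that the crossing indicator of a cut is itself a cut metric on $V$ (restricted appropriately), so $\cut_G(\gamma)=\sum_{e=(u,v)}c(e)\cdot\mu(u,v)$ where $\mu$ is the $V$-metric $\mu=\sum_{S\in\supp(\gamma)}\gamma_S\cdot(\text{cut metric of }X_S)$. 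By the same token, $\cut_G(\beta)\ge \sum_{e=(u,v)}c(e)\cdot\nu(u,v)$ for \emph{any} valid lower bound $\nu$ on the "terminal cost structure" of $\beta$. So it suffices to show that $\mu$ is dominated, edge by edge, by something we can extract from $\beta$; more precisely, it suffices to exhibit, for the particular graph $G$, an assignment certifying $\sum_e c(e)\mu(u,v)\le \cut_G(\beta)$.

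The heart of the argument should be the following reduction: because $\supp(\gamma)$ is laminar, the collection $\{X_S : S\in\supp(\gamma)\}$ of chosen minimum cuts can be taken to be laminar as well on $V$ — this is the standard uncrossing argument for minimum cuts (if $X_S$ and $X_{S'}$ cross, then by submodularity of the cut function of $G$ one can replace them by $X_S\cap X_{S'}$ and $X_S\cup X_{S'}$, or $X_S\setminus X_{S'}$ and $X_{S'}\setminus X_S$, without increasing total capacity, in a way compatible with the laminar relation between $S$ and $S'$ on $T$). Once $\{X_S\}$ is laminar, $\mu$ is a metric realized by a laminar family of cuts on $V$; equivalently, $\mu$ is (a scalar multiple of) a "tree metric"-like object. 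I would then build an auxiliary structure — think of contracting each connected piece of the laminar decomposition — and on this structure route the demand $\D_\gamma$ between terminals using the laminar cuts, and then re-route it through cuts witnessing $\D_\beta$, invoking hypothesis (ii) that $\D_\beta(t,t')\ge \D_\gamma(t,t')$ for every terminal pair. The routing/flow viewpoint is: we want a "flow" that sends, for each $S\in\supp(\gamma)$, $\gamma_S$ units across the $G$-cut $X_S$, and dominates it by a combination of $G$-cuts drawn from $\beta$; LP duality (max-flow/min-cut on an auxiliary network whose terminals are $T$ and whose demands come from $\D_\gamma$, supplies from $\D_\beta$) converts "$\D_\beta$ dominates $\D_\gamma$" into exactly the needed edge-by-edge inequality, provided the laminar structure lets us restrict attention to laminar cuts.

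Concretely I would proceed in this order: (1) fix minimum cuts $X_S$ in $G$ for $S\in\supp(\gamma)$ and uncross them to make $\{X_S\}$ laminar on $V$, checking this preserves $\sum_S\gamma_S\cut_G(S)$; (2) write $\cut_G(\gamma)=\tfrac12\sum_{(u,v)\in E}c(u,v)\sum_S\gamma_S\mathbbm{1}[|\{u,v\}\cap X_S|=1]$; (3) for each edge $(u,v)$, observe that $\sum_S\gamma_S\mathbbm{1}[\cdots]$ is a sum over the laminar chain of sets separating $u$ from $v$, and bound it by $\D_\gamma(t_u,t_v)$ for appropriately chosen "representative" terminals $t_u,t_v$ — here laminarity is essential, since in a laminar family the sets cutting a given pair form a chain, which behaves like distance along a path; (4) apply $\D_\gamma\le \D_\beta$ to replace the bound by $\D_\beta(t_u,t_v)$; (5) recognize $\sum_{(u,v)}c(u,v)\D_\beta(t_u,t_v)$ as a feasible fractional assignment lower-bounding $\cut_G(\beta)$ — i.e., $\D_\beta$ being a positive combination of cut metrics on $T$, each cut metric term $\beta_{S'}\mathbbm{1}[S'\text{ cuts }t_u,t_v]$ contributes at most $\beta_{S'}\cut_G(S')$ in total because $\{(u,v):S'\text{ cuts }t_u,t_v\}$ forms a valid $S'$-vs-$(T\setminus S')$ cut in $G$, hence has capacity at least $\cut_G(S')$... wait, we need the reverse; so step (5) instead uses that the edges separated by $S'$ under the \emph{representative} map form a cut whose capacity is at least $\cut_G(S')$, giving $\sum_{(u,v)}c(u,v)\mathbbm{1}[S'\text{ cuts }t_u,t_v]\ge \cut_G(S')$, and we must be careful about the direction — this is precisely where the argument is delicate and where I expect the main obstacle to lie.

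The main obstacle, then, is step (3)–(5): turning the per-edge combinatorial bound into the correct global inequality. Bounding $\cut_G(\gamma)$ from above by an edge-weighted sum of $\D_\gamma$-distances is easy; the subtlety is that to conclude $\le \cut_G(\beta)$ we need the edge-weighted sum of $\D_\beta$-distances to be \emph{at most} $\cut_G(\beta)$, and a priori $\sum_{(u,v)}c(u,v)\D_\beta(t_u,t_v)$ could exceed $\cut_G(\beta)$ unless the "representative terminal" assignment $u\mapsto t_u$ is chosen so that, for every $S'\in\supp(\beta)$, the $G$-edges it cuts (under the representative map) really do form a \emph{sub}-cut of an $S'$-separating cut, not a super-cut. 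I expect the resolution to require choosing the representatives using the laminar structure of $\supp(\gamma)$ together with the uncrossed cuts $\{X_S\}$ — essentially, contract $V$ along the finest laminar pieces so that each super-node carries a well-defined terminal label, and then argue that every $\beta$-cut on $T$ pulls back to a genuine cut in the contracted graph whose capacity is $\le$ the corresponding $\cut_G$ value. Making this pullback argument precise, and verifying it is compatible with the uncrossing in step (1), is the crux; the rest is bookkeeping with indicator sums and the triangle-inequality-free "chain" structure of laminar families.
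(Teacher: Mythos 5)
Your proposal does not close the argument, and the gap you yourself flag at steps (3)--(5) is a real one, not a technicality. The charging direction is reversed relative to what can work. After fixing integral min-cuts $X_S$ for the sets $S\in\supp(\gamma)$ and assigning representative terminals $t_u$, your plan needs, for every $S'\in\supp(\beta)$, the bound $\sum_{(u,v)\in E}c(u,v)\cdot\mathbbm{1}[S'\text{ cuts }t_u,t_v]\le \cut_G(S')$, i.e.\ that the edges whose representatives are separated by $S'$ have capacity \emph{at most} the minimum $S'$-cut. But any edge set that separates $S'$ from $T\setminus S'$ has capacity \emph{at least} $\cut_G(S')$; to get the $\le$ direction you would need these edges to lie inside one fixed minimum cut for $S'$, and nothing in your construction forces that: the representative map is determined entirely by the $\gamma$-side cuts $\{X_S\}$ and knows nothing about where minimum cuts for the sets in $\supp(\beta)$ sit. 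There is no choice of representatives that repairs this, because the statement you need is a per-$S'$ upper bound on an edge-capacity sum by a min-cut value, which is structurally the wrong inequality. (Step (3) is also shakier than you suggest: with nested sets such as $\{a\},\{b\},\{a,b\}$ in $\supp(\gamma)$, a Steiner vertex lying in $X_{\{a,b\}}$ but outside $X_{\{a\}}\cup X_{\{b\}}$ and adjacent to all three regions admits no consistent representative; per-vertex labels cannot simultaneously satisfy the constraints imposed by the three crossing cuts.)

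The workable argument runs in the opposite direction, and this is what the paper does. Take minimum cuts for the sets in $\supp(\beta)$ (not $\supp(\gamma)$) and convert them into an edge-length function $\ell$ by adding $\beta_S$ to the length of every edge of the chosen min-cut for $S$; then $\sum_e c(e)\ell_e=\cut_G(\beta)$ exactly, and $\dist_\ell(t,t')\ge \D_\beta(t,t')\ge \D_\gamma(t,t')$ for all terminal pairs. Then one extracts, for each $S\in\supp(\gamma)$, a \emph{fractional} cut: working in a continuized version of $G$, grow moats (annular regions) of width $\gamma_S$ around the terminals of $S$, processing the laminar family from innermost sets outward; laminarity plus the metric domination guarantees the regions for distinct $S$ are disjoint, so the induced length functions $\ell^S$ satisfy $\sum_S\ell^S_e\le\ell_e$, and the min-cut LP (duality) gives $\sum_e c(e)\ell^S_e\ge\gamma_S\cdot\cut_G(S)$ because $\ell^S$ puts distance at least $\gamma_S$ between $S$ and $T\setminus S$. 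The LP lower bound on $\cut_G(S)$ via feasible length functions is precisely the inequality direction your plan is missing: fractional cuts on the $\gamma$ side can be charged against an exact accounting of $\cut_G(\beta)$, whereas integral cuts on the $\gamma$ side charged against $\beta$ cannot.
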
 

Using our main result above, we can prove a generalization of Karger's approximate min-cut counting result. The initial result in \cite{karger1993global} stated that, in an $n$-vertex graph, the number of cuts whose value is at most $\alpha$ times the global min-cut is at most $2^{2\alpha}\cdot\binom{n}{2\alpha}$, for every half-integer $\alpha>0$. Later, in a subsequent work \cite{karger2000minimum}, it was pointed out that the number of $\alpha$-approximate global min-cut for an $n$-vertex cycle graph is at least
$$C_{n,\alpha}=\binom{n}{2}+\binom{n}{4}+\ldots+\binom{n}{2\alpha},$$
for every integer $\alpha>0$, and he then showed an almost matching upper bound of $(1+o(1))\cdot C_{n,\alpha}$ for all graphs, improving upon his previous bound $2^{2\alpha}\cdot\binom{n}{2\alpha}$.

We prove the following generalization of Karger's result to graphs with terminals.

\begin{theorem} \label{thm:terminal-karger}
	Let $G$ be a connected graph and $T$ be a set of its vertices. Then for any integer $\alpha\ge 1$, the number of subsets $S$ of $T$ with
	$\cut_G(S)\le \alpha\cdot \min_{\emptyset\subsetneq S'\subsetneq T}\set{\cut_G(S')}$ is at most $(1+o(1))\cdot C_{|T|,\alpha}$.
\end{theorem}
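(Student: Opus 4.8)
The plan is to follow the contraction-style argument of Karger, but run it on an auxiliary weighted hypergraph / cut structure supported on the terminals, where the key quantity that replaces ``number of vertices'' is the total weight of a laminar packing of terminal cuts. Concretely, fix $G$ and $T$, and let $\lambda=\min_{\emptyset\subsetneq S'\subsetneq T}\cut_G(S')$ be the terminal min-cut value. Suppose $S_1,\dots,S_m$ are distinct proper subsets of $T$ with $\cut_G(S_i)\le\alpha\lambda$ for all $i$; we want $m=O\big(\binom{|T|}{2\alpha}\big)$. First I would reduce to the case where the family $\{S_1,\dots,S_m\}$ is laminar (replacing each $S_i$ by $S_i$ or $T\setminus S_i$, whichever has size $\le |T|/2$, and noting a laminar family on $|T|$ elements has at most $2|T|$ members), which is too weak by itself, so instead the real strategy is a randomized/counting one: we want to show that a uniformly random ``laminar contraction process'' on $T$ isolates any fixed small terminal cut with probability at least roughly $\binom{|T|}{2\alpha}^{-1}$, which then bounds $m$ by a union/expectation argument exactly as in Karger.

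The heart of the argument is to build the right contraction process. The plan is: define on the terminal set $T$ a symmetric ``separation weight'' $w(t,t')$ for each pair $t,t'\in T$ by taking $w=\D_\gamma$ for an appropriately chosen laminar type vector $\gamma$, chosen so that $\cut_G(\gamma)$ is large (at least $\Omega(|T|\lambda)$) while $\D_\gamma$ is dominated by the uniform-type metric $\D_\beta$ whose cuts are exactly the singletons scaled so that $\D_\beta(t,t')\equiv 1$; by Theorem~\ref{lem:cover} this would give $\sum_t\cut_G(\{t\})\ge\cut_G(\gamma)$, hence the average over $t$ of $\cut_G(\{t\})$ is controlled. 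More usefully, for the contraction bound I would instead pick, for the fixed target cut $S=S_i$, a laminar type vector $\gamma$ whose support contains $S$ and whose induced metric $\D_\gamma$ is dominated by the ``all unit edges'' metric; Theorem~\ref{lem:cover} then yields $\cut_G(\beta)\ge\cut_G(\gamma)\ge\beta_S'\cdot\cut_G(S)\cdot(\text{stuff})$-type inequalities that let us compare the weight a random edge of $G$ places across $S$ to the total weight, giving the per-step survival probability $1-O(\alpha/|T'|)$ when $|T'|$ terminals remain. Iterating the contraction of randomly chosen (weighted) terminal pairs down from $|T|$ to $2\alpha$ terminals, the probability that $S$ survives uncut is at least $\prod_{k=2\alpha}^{|T|}\big(1-\tfrac{2\alpha}{k}\big)=\Omega\big(\binom{|T|}{2\alpha}^{-1}\big)$, and since distinct $S_i$ cannot all survive simultaneously, $m=O\big(\binom{|T|}{2\alpha}\big)$.

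To make the per-step bound work I need the correct substitute for ``a uniformly random edge endpoint lands in a min-cut side with small probability.'' The clean way is this: at each stage we have a partition-refinement of $T$ into ``super-terminals''; contract inside $G$ accordingly to get a graph $G'$ on the current super-terminal set $T'$. For a fixed proper subset $S\subseteq T'$ with $\cut_{G'}(S)\le\alpha\lambda$ (note $\lambda$ only goes up under identification, and the relevant min-cut bound is preserved since contraction only identifies terminals), consider the type vector $\gamma$ that is the uniform weight $\tfrac{1}{|T'|-1}$ on all singletons of $T'$ except it is perturbed to route through $S$; comparing against $\beta=$ uniform singleton weights via Theorem~\ref{lem:cover}, one extracts $\cut_{G'}(S)\le \tfrac{1}{|T'|}\sum_{t\in T'}\cut_{G'}(\{t\})\cdot(1+o(1))$ is NOT quite enough, so the honest move is to apply the theorem to the pair $\beta=$ (the cut metric induced by a uniformly random near-perfect matching on $T'$, which has $\D_\beta(t,t')\in\{0,1\}$) and $\gamma=$ (weight $c$ on the single set $S$), which requires only that every edge of that random matching crosses $S$, i.e. that $S$ is ``well spread,'' and then take expectation over the matching. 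I expect \textbf{this step --- packaging the min-cut bound $\cut_{G'}(S)\le\alpha\lambda$ into a laminar-dominated comparison that forces $S$ to be crossed by only an $O(\alpha/|T'|)$ fraction of $G'$'s weight --- to be the main obstacle}, since it is exactly where Theorem~\ref{lem:cover} has to do the work that ``global min-cut $\ge\lambda$ on $n$ vertices forces min degree $\ge\lambda$ hence each endpoint is in the cut with probability $\le 2/n$'' does in the classical proof; once that per-step inequality is in hand, the product telescoping and the union bound over surviving cuts are routine.
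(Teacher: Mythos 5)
Your proposal has a genuine gap, and you have in fact flagged it yourself: the entire argument hinges on a per-step inequality saying that, after some terminal identifications, a fixed cut $S$ with $\cut_{G'}(S)\le\alpha\lambda$ is crossed by a randomly chosen (weighted) terminal pair with probability $O(\alpha/|T'|)$, and you never exhibit a pair-sampling distribution, nor a way to package \Cref{lem:cover} to produce one. The concrete attempts you sketch do not do the job: taking $\gamma$ supported on the single set $S$ and $\beta$ a matching-type vector, \Cref{lem:cover} gives $\cut_{G'}(\beta)\ge\cut_{G'}(\gamma)$, i.e.\ an \emph{upper} bound on $\cut_{G'}(S)$ in terms of other cut values, whereas what the contraction analysis needs is a statement that the total ``sampling weight'' is at least $\Omega(|T'|\lambda)$ while only an $O(\alpha\lambda)$ portion of it crosses $S$ --- the terminal analogue of ``every vertex has degree at least $\lambda$,'' which is exactly what fails in the presence of Steiner vertices (the single-terminal cuts can reuse the same edges, and there is no edge structure on $T'$ to contract). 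The paper itself notes that Karger's contraction argument cannot give this bound for graphs with terminals. There is also a secondary flaw at the end: it is not true that distinct $S_i$ ``cannot all survive simultaneously'' --- when $2\alpha$ super-terminals remain, up to $2^{2\alpha-1}$ cuts survive a single run, so even a fully successful contraction analysis would give $m=O\big(2^{2\alpha}\binom{|T|}{2\alpha}\big)$ rather than the claimed $O\big(\binom{|T|}{2\alpha}\big)$.

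For comparison, the paper's proof avoids contraction entirely. Assuming $K=5\binom{k}{2\alpha}$ cuts of value at most $\alpha\Pi$ exist, it sets $\beta$ to be their indicator vector and writes an LP maximizing $\sum_S\gamma_S$ subject to $\D_\gamma\le\D_\beta$; an uncrossing argument shows some optimal solution has laminar support, so \Cref{lem:cover} applies and forces the optimum to be at most $\alpha K$. The dual LP is the MST relaxation over the metric $\D_\beta$, with integrality gap at most $2$, and a purely combinatorial count --- each subset of $T$ cuts a spanning tree $\tau$ in some set of tree edges, and each $i$-edge subset of $E(\tau)$ arises from exactly two subsets of $T$, so at most $2\binom{k-1}{i}$ of the $K$ cuts can cross $\tau$ in only $i\le 2\alpha$ edges --- shows every spanning tree has $\D_\beta$-cost greater than $2\alpha K$, a contradiction. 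If you want to salvage your approach, you would need to invent the missing weighted terminal-pair contraction step, which is precisely the hard content; the LP-duality route sidesteps it.
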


After the first version of the paper, we are notified that the same bound can be proved via Mader's edge-splitting technique and the result in \cite{karger2000minimum}. We sketch the proof in \Cref{sec: discussion}.
Techniquewise, our proof utilizes LP and Duality, and is fundamentally different from the randomized contraction algorithm which gave the bound in \cite{karger1993global}, or the approach in \cite{karger2000minimum} which used the tree-packing result of Tutte and Nash-Williams. %Therefore, our result can be viewed as an alternative proof to Karger's result.
We believe this duality-based approach using the laminar-family characterization of terminal cut functions should be of independent interest and have the potential of proving/improving more theorems.

\paragraph{Related Work.}
Beideman, Chandrasekaran, and Wang \cite{beideman2023approximate} proved an asymptotically same bound of $n^{O(\alpha)}$ for $\alpha$-approximate global min-cuts via a different approach, and showed an deterministic poly-time algorithms (for constant $\alpha$) for enumerating all of them. For smaller values of $\alpha$, Benczur \cite{benczur1995representation} showed an $O(n^2)$ upper bound for the number of $6/5$-approximate global min-cut. Later, Nagamochi, Nishimura and Ibaraki \cite{nagamochi1997computing} showed the same bound for $4/3$-approximate global min-cuts, and then Henzinger and Williamson \cite{henzinger1996number} showed the same bound for the number of cuts that are strictly smaller than $3/2$ times the global min-cut.
There have also been some results on characterizing set functions that can be realized as cut functions (but not terminal cut functions) of graphs (both directed and undirected, but without terminals) and special families of hypergraphs \cite{fujishige2001realization,yamaguchi2016realizing}.

\paragraph{Organization.}
We provide the proof of \Cref{lem:cover} in \Cref{sec: main} and the proof of \Cref{thm:terminal-karger} in \Cref{sec: Karger}. Finally in \Cref{sec: discussion}, we give some discussions on our results and future works.

\section{Proof of \Cref{lem:cover}}
\label{sec: main}

In this section, we prove \Cref{lem:cover}.
We denote by $G=(V,E,c)$ the input graph, where $\set{c(e)}_{e\in E}$ are edge capacities. The value of a cut is the total capacity of all edges in the cut, so all minimum cuts discussed in this section are minimum-capacity cuts.

At a high level, we ``combine'' fractional min-cuts (which can be interpreted as edge length functions in the LP for computing min-cuts) separating $S$ from $T\setminus S$ for all subsets $S\in \supp(\beta)$, and then ``extract'' from it fractional min-cuts separating $S'$ from $T\setminus S'$ for all $S'\in \supp(\gamma)$, thereby proving that the total min-cut size for $S'\in \supp(\gamma)$ is at most the total min-cut size for $S\in \supp(\beta)$.

%We now give proof in detail.

We start by assigning weights (lengths), which are irrelevant to capacities, to edges of $G$, as follows. 
Initially, all edges $e\in E$ have length $0$.
For each set $S\in \supp(\beta)$, we compute a minimum cut in $G$ separating $S$ from $T\setminus S$, and increase the length of each edge in this cut by $\beta_S$. We denote by $\ell_e$ the length of each edge $e$ after we have processed all sets $S\in \supp(\beta)$, and denote by $\dist_\ell(\cdot,\cdot)$ the shortest-path distance metric on $V$ induced by edge lengths $\set{\ell_e}_{e\in E}$. We start by proving the following observations.

\begin{observation}
\label{obs: dist}
For every pair $t,t'\in T$, $\dist_{\ell}(t,t')\ge \D_{\beta}(t,t')$.
\end{observation}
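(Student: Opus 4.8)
The plan is to show that the edge length function $\{\ell_e\}$ is built up as a nonnegative combination of cut metrics, and that each such cut metric contributes at least as much to $\dist_\ell(t,t')$ as it does to $\D_\beta(t,t')$. First I would fix a pair $t,t'\in T$ and let $P$ be any shortest path between $t$ and $t'$ under the edge lengths $\{\ell_e\}$, so that $\dist_\ell(t,t')=\sum_{e\in P}\ell_e$. Expanding $\ell_e=\sum_{S\in\supp(\beta)}\beta_S\cdot\mathbbm{1}[e\in C_S]$, where $C_S$ denotes the fixed minimum cut in $G$ separating $S$ from $T\setminus S$ that was chosen when processing $S$, I get
\[
\dist_\ell(t,t')=\sum_{e\in P}\sum_{S\in\supp(\beta)}\beta_S\cdot\mathbbm{1}[e\in C_S]
=\sum_{S\in\supp(\beta)}\beta_S\cdot\bigl|P\cap C_S\bigr|.
\]

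The key step is then a purely combinatorial observation about paths and cuts: if a subset $S\subseteq T$ cuts the pair $t,t'$ (i.e. exactly one of $t,t'$ lies in $S$), then the cut $C_S$, being an edge set whose removal disconnects $S$ from $T\setminus S$, must disconnect $t$ from $t'$; hence any $t$--$t'$ path $P$ in $G$ contains at least one edge of $C_S$, so $|P\cap C_S|\ge\mathbbm{1}[S\text{ cuts }t,t']$. Summing this inequality against the nonnegative weights $\beta_S$ yields
\[
\dist_\ell(t,t')=\sum_{S\in\supp(\beta)}\beta_S\cdot|P\cap C_S|
\ge\sum_{S\in\supp(\beta)}\beta_S\cdot\mathbbm{1}[S\text{ cuts }t,t']
=\D_\beta(t,t'),
\]
where the last equality is just the definition of $\D_\beta$ (the sum over all proper subsets $S$ restricts to $\supp(\beta)$ since $\beta_S=0$ off the support). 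Since $t,t'$ were arbitrary, the claim follows.

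I do not anticipate a genuine obstacle here; the only point requiring a word of care is the justification that $C_S$ separates $t$ from $t'$ in $G$ whenever $S$ cuts $t,t'$. This is immediate from the definition of a cut separating $S$ from $T\setminus S$: after deleting $C_S$ no vertex of $S$ is in the same component as any vertex of $T\setminus S$, and $t,t'$ lie on opposite sides of this partition of $T$, so they end up in different components. One should also note that the minimum-cut property of $C_S$ is not used in this observation at all — only that $C_S$ is \emph{some} valid cut; the capacities and minimality will be exploited later in the argument. A minor bookkeeping remark is that if a set $S$ is processed and its chosen min-cut happens to be empty (which cannot occur when $S$ cuts some terminal pair, since $t,t'\in T$ forces $\cut_G(S)>0$ in a connected instance, but could occur in degenerate disconnected cases), the inequality $|P\cap C_S|\ge 0=\mathbbm{1}[S\text{ cuts }t,t']$ is trivially still fine as long as $P$ exists; and if $t,t'$ lie in different connected components of $G$ then $\dist_\ell(t,t')=\infty$ and there is nothing to prove.
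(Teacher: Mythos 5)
Your proof is correct and follows essentially the same route as the paper: both arguments observe that every set $S\in\supp(\beta)$ cutting the pair $t,t'$ contributes a min-cut that separates $t$ from $t'$, hence meets any $t$--$t'$ path in at least one edge, and summing these contributions weighted by $\beta_S$ lower-bounds the path length by $\D_\beta(t,t')$. Your additional remarks (that minimality of $C_S$ is not needed, and the degenerate disconnected case) are fine but not required.
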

\begin{proof}
%By definition, $\D_\beta(t,t')=\sum_{S}\beta_S\cdot\mathbbm{1}[S \text{ cuts } t,t']$.
Consider any path $P$ connecting $t,t'$ in $G$.
Let $S$ be a set in $\supp(\beta)$ that cuts $t,t'$. From our algorithm, we computed a min-cut separating $S$ from $T\setminus S$ and increased the length of its edges by $\beta_S$, so this cut also separates $t$ from $t'$, and therefore must contain an edge in $P$, whose length was increased by $\beta_S$ in this iteration. Therefore, the length of $P$ is at least $\sum_{S: S\text{ cuts }t,t'}\beta_S$, which by definition is $\D_\beta(t,t')$.
\end{proof}

%\znote{$c(e)$ define first}

\begin{observation}
\label{obs: cut size}
$\cut_G(\beta)=\sum_{e}c(e)\cdot \ell_e$.
\end{observation}
\begin{proof}
At the beginning, every edge has length $\ell_e=0$ and so $\sum_{e}c(e)\cdot \ell_e=0$. In the iteration of processing $S$, we computed a min-cut $E'$ in $G$ separating $S$ from $T\setminus S$, with $\sum_{e\in E'}c(e)=\cut_G(S)$, and increased the length of each of its edges by $\beta_S$, so the total increase in the sum $\sum_{e}c(e)\cdot \ell_e$ is $\sum_{e\in E'}c(e)\cdot \beta_S=\beta_S\cdot \cut_G(S)$. Therefore, in the end, $\sum_{e}c(e)\cdot \ell_e=\sum_S\beta_S\cdot \cut_G(S)=\cut_G(\beta)$. 
\end{proof}

We will now construct another collection $\set{\ell^S}_{S\in \supp(\gamma)}$ of edge length functions, such that
\begin{properties}{P}
\item for each edge $e\in E(G)$, $\ell_e\ge \sum_{S\in \supp(\gamma)}\ell^S_e$; and
\label{prop: length bound}
\item for each set $S\in \supp(\gamma)$, $\sum_e c(e)\cdot \ell^S_e\ge \gamma_S\cdot \cut_G(S)$.
\label{prop: cut size}
\end{properties}
Note that, if such edge length functions exist, then from \Cref{obs: cut size},
\[
\begin{split}
\cut_G(\beta) & =\sum_{e}c(e)\cdot \ell_e\ge \sum_{e}c(e)\cdot \bigg(\sum_{S\in \supp(\gamma)}\ell^S_e\bigg)\\
& = \sum_{S\in \supp(\gamma)}\bigg(\sum_e c(e)\cdot\ell^S_e\bigg)\ge \sum_{S\in \supp(\gamma)}\gamma_S\cdot \cut_G(S)=\cut_G(\gamma),
\end{split}
\]
and we are done. Therefore, from now on we focus on constructing such edge length functions.

\newcommand{\con}{\textnormal{\textsf{con}}}
\newcommand{\lcon}{\ell^{\con}}
\newcommand{\Vcon}{V^{\con}}

\subsubsection*{Step 1. Continuization of a graph}

We introduce the notion of \emph{continuization} of a graph, where each edge is viewed as a continuous line segment full of points, and a metric on these points are naturally induced by the edge lengths.

Let $G=(V,E,\ell)$ be an edge-weighted graph. Its \emph{continuization} is a metric space $(V^{\con},\ell^{\con})$, that is defined as follows. 
Each edge $(u,v)\in E$ is viewed as a continuous line segment $\con(u,v)$ of length $\ell_{(u,v)}$ connecting $u,v$, and the point set $V^{\con}$ is the union of the points in all lines $\set{\con(u,v)}_{(u,v)\in E}$. 
Specifically, for each edge $(u,v)\in E$, the line $\con(u,v)$ is defined as \[\con(u,v)=\set{(u,\delta)\mid 0\le \delta\le \ell_{(u,v)}}=\set{(v,\theta)\mid 0\le \theta\le \ell_{(u,v)}},\]
where $(u,\delta)$ refers to the unique point in the line that is at distance $\delta$ from $u$, and $(v,\theta)$ refers to the unique point in the line that is at distance $\theta$ from $v$, so $(u,\delta)=(v,\ell_{(u,v)}-\delta)$.

The metric $\ell^{\con}$ on $V^{\con}$ is essentially the geodesic distance induced by the shortest-path distance metric $\dist_{\ell}(\cdot,\cdot)$ on $V$.
Formally, for a pair $p,p'$ of points in $V^{\con}$, 
\begin{itemize}
	\item if $p,p'$ lie in the same line $(u,v)$, say $p=(u,\delta)$ and $p'=(u,\delta')$, then $\lcon(p,p')=|\delta-\delta'|$;
	\item if $p$ lies in the line $(u,v)$ with $p=(u,\delta)$ and $p'$ lies in the line $(u',v')$ with $p'=(u',\delta')$, then 
	\[
	\begin{split}
	\lcon(p,p')=\min\{
	&  \dist_{\ell}(u,u')+\delta+\delta', \quad
	\dist_{\ell}(u,v')+\delta+(\ell_{(u',v')}-\delta'),\\
	&	\dist_{\ell}(v,u')+(\ell_{(u,v)}-\delta)+\delta', \quad
	\dist_{\ell}(v,v')+(\ell_{(u,v)}-\delta)+(\ell_{(u',v')}-\delta')
	\}.
	\end{split}
	\]
\end{itemize}
Clearly, every vertex $u\in V$ also belongs to $\Vcon$, and for a pair $u,u'\in V$, $\dist_{\ell}(u,u')=\lcon(u,u')$.
%For a path $P$ in $G$ connecting $u$ to $u'$, it naturally corresponds to a set $P^{\con}$ of points in $\Vcon$, which is the union of all lines corresponding to edges in $E(P)$. The set $P^{\con}$ naturally inherits the metric $\lcon$ restricted on $P^{\con}$. We will also call $P^{\con}$ a path in the continuization $(\Vcon,\lcon)$.
See \Cref{fig: con} for an illustration.

\begin{figure}[h]
	\centering
	\scalebox{0.15}{\includegraphics{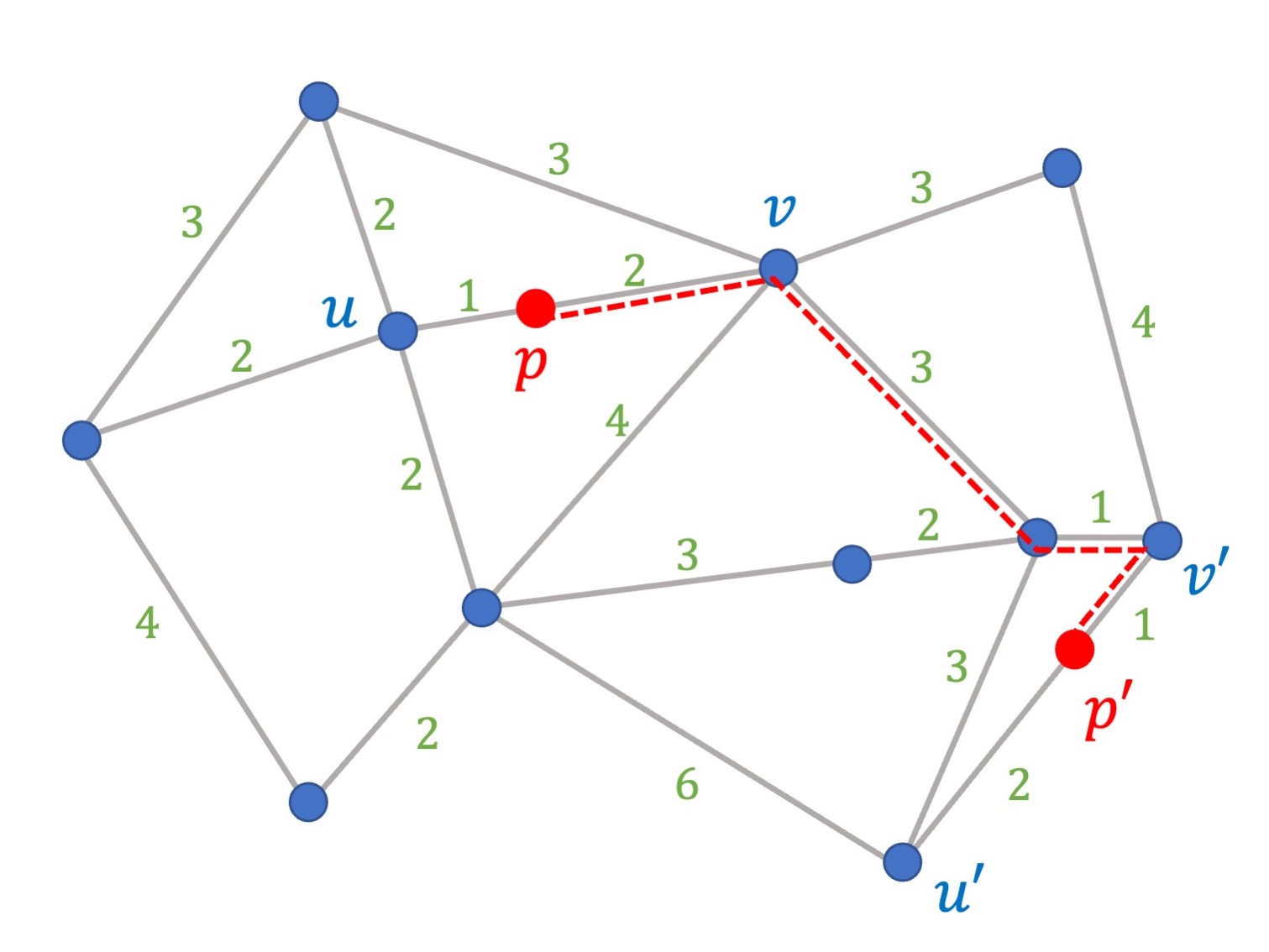}}
	\caption{An illustration of the continuization of a graph. Line $\con(u,v)$ has length $3$ and a point $p$ in this line is $p=(u,1)=(v,2)$. Line $\con(u',v')$ has length $3$ and a point $p'$ in this line is $p'=(u',2)=(v',1)$. The shortest (in $\lcon$) path connecting $p$ to $p'$ in $\Vcon$ is shown in the red dashed line.\label{fig: con}}
\end{figure}

For a path $P$ in $G$ connecting $u$ to $u'$, it naturally corresponds to a set $P^{\con}$ of points in $\Vcon$, which is the union of all lines corresponding to edges in $E(P)$. The set $P^{\con}$ naturally inherits the metric $\lcon$ restricted on $P^{\con}$. We will also call $P^{\con}$ a path in the continuization $(\Vcon,\lcon)$.
A subset of a line segment $\con(u,v)$ of the form $\set{(u,\delta)\mid a\le \delta< b}$ (where $0\le a<b\le \ell_{(u,v)}$) is called a \emph{sub-segment}, and its \emph{length} is defined as $(b-a)$.

\subsubsection*{Step 2. Balls, regions and length functions}

For a vertex $u\in V$ and a real number $r\ge 0$, we define the \emph{ball around $u$ with radius $r$}, denoted by $\ball(u,r)$, as the collection of all points in $\Vcon$ that is at distance (in $\lcon$) less than $r$ from $u$. That is,
\[
\ball(u,r)=\bigg\{p\in \Vcon \text{ }\bigg| \text{ } 0\le \lcon(p,u)< r\bigg\}.
\]

We now use balls to define regions and then construct edge length functions. 
%Recall that $\supp(\gamma)$ is a laminar family. 
%We associate with it a rooted tree $\tau$ as follows. The vertices of $\tau$ are sets in $\supp(\gamma)\cup \set{T}$, where set $T$ is the root of $\tau$, and set $S$ is the parent of $S'$ iff $S'\subsetneq S$ and there is no other set $S''\in \supp(\gamma)$ with $S'\subsetneq S''\subsetneq S$. 
%For each node in the tree, we define its \emph{depth} as the long
We iteratively process sets $S$ in $\supp(\gamma)$ and define its region $\Phi_S$, as follows. 
Throughout, we maintain a collection $\set{r_t}_{t\in T}$ of radii for all terminals. Initially, set $r_t=0$ for all $t\in T$.
In each iteration, we pick a set $S$ such that $S$ is not processed but all other sets $S'$ in $\supp(\gamma)$ with $S'\subseteq S$ have been processed. We then define the \emph{region of $S$} as (see \Cref{fig: ball} for an illustration)
\[\Phi_S=\bigg(\bigcup_{t\in S}\ball(t,r_t+\gamma_S)\bigg)\setminus \bigg(\bigcup_{t\in S}\ball(t,r_t)\bigg).\]

\begin{figure}[h]
	\centering
	\scalebox{0.15}{\includegraphics{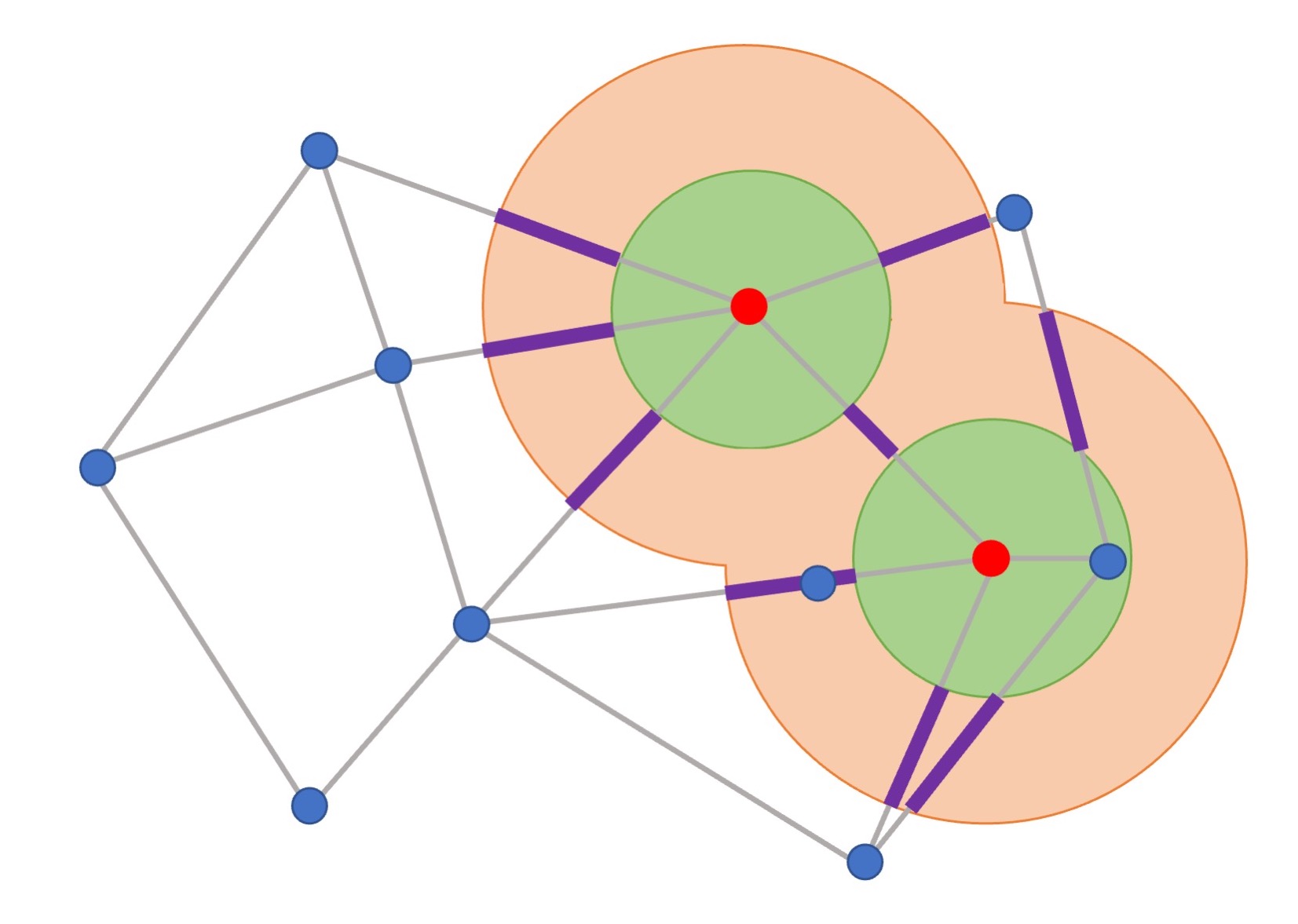}}
	\caption{An illustration of balls and regions. The terminals in $S$ are shown in red. The balls $\set{\ball(t,r_t)}_{t\in S}$ are shown in green. The region $\Phi_S$ is the union of all sub-segments shown in purple.\label{fig: ball}}
\end{figure}

We then increase $r_t$ by $\gamma_S$ for all $t\in S$, and continue to the next iteration. 
We keep processing sets in $\supp(\gamma)$ in this way, until all regions $\set{\Phi_S}_{S\in \supp(\gamma)}$ have been constructed.

From our algorithm of growing balls and the definition of metric $\lcon(\cdot,\cdot)$, it is immediate to observe that, for each line $\con(u,v)$ in $\Vcon$ and each region $\Phi_S$, their intersection is either 
%
%\iffalse
%\znote{endpoints not necessarily $u$ or $v$}
\begin{properties}{C}
\item an empty set; or
\label{case: empty}
\item a sub-segment of $\con(u,v)$; or 
\label{case: 1}
\item the union of two sub-segments of $\con(u,v)$.
\label{case: 2}
\end{properties}
%\fi

In each case, we set length $\ell^S_e$ of the edge $e=(u,v)$ as the \emph{total length} of the intersection. That is, in Case \ref{case: empty}, we set $\ell^S_e=0$; in Case \ref{case: 1}, we set $\ell^S_e$ as the length of the sub-segment; and in Case \ref{case: 2}, we set $\ell^S_e$ as the total length of two sub-segments. This completes the definition of $\ell^S$.

It remains to show that Properties \ref{prop: length bound} and \ref{prop: cut size} hold, which we prove in the following claims.

\begin{claim}
\label{clm: length}
For each edge $e\in E(G)$, $\ell_e\ge \sum_{S\in \supp(\gamma)}\ell^S_e$.
\end{claim}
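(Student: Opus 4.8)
I need to show that for each edge $e=(u,v)$, the lengths $\ell^S_e$ assigned to $e$ across all $S\in\supp(\gamma)$ sum to at most $\ell_e$, the full length of the line segment $\con(u,v)$. The natural idea is that each $\ell^S_e$ equals the total length of the intersection of the region $\Phi_S$ with the segment $\con(u,v)$, so it suffices to show that the regions $\set{\Phi_S}_{S\in\supp(\gamma)}$ are \emph{pairwise disjoint} as subsets of $\Vcon$ — then the sum of the lengths of their intersections with $\con(u,v)$ is at most the length of the entire segment. So the plan reduces to proving: for any two distinct $S_1,S_2\in\supp(\gamma)$, $\Phi_{S_1}\cap\Phi_{S_2}=\emptyset$.

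**Exploiting laminarity and the processing order.**

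Here is where condition (i) — that $\supp(\gamma)$ is laminar — and the processing order come in. Since the family is laminar, any two distinct sets $S_1,S_2$ are either disjoint or nested. I would handle the two cases separately. For the \emph{nested} case, say $S_1\subsetneq S_2$: by the processing rule, $S_1$ is processed before $S_2$. The key observation is that just after processing $S_1$, every terminal $t\in S_1$ has had its radius $r_t$ increased by $\gamma_{S_1}$, so $\ball(t,r_t)$ (with the \emph{updated} $r_t$) now contains all of $\Phi_{S_1}$; more precisely, $\Phi_{S_1}\subseteq\bigcup_{t\in S_1}\ball(t,r_t^{\mathrm{after}})$, and $r_t$ only grows in later iterations. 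When $S_2$ is later processed, $\Phi_{S_2}$ explicitly \emph{subtracts off} $\bigcup_{t\in S_2}\ball(t,r_t)$, which (since $S_1\subseteq S_2$ and radii have only grown) contains $\bigcup_{t\in S_1}\ball(t,r_t^{\mathrm{after}})\supseteq\Phi_{S_1}$. Hence $\Phi_{S_2}$ is disjoint from $\Phi_{S_1}$. For the \emph{disjoint} case $S_1\cap S_2=\emptyset$, I need a separate argument: a point $p\in\Phi_{S_1}$ lies within distance $r_t+\gamma_{S_1}$ of some $t\in S_1$ for the radii current at $S_1$'s processing time, and I want to argue it cannot simultaneously lie in $\Phi_{S_2}$. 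This should follow from the fact that the growth of balls around terminals of $S_1$ and the growth around terminals of $S_2$ never "collide" — intuitively the regions $\Phi_S$ partition, segment by segment, the portion of $\con(u,v)$ that gets covered, and a point enters at most one region because once it is inside $\ball(t,r_t)$ for the then-current $r_t$ of some $t$, it is excluded from all future regions. I'd formalize this via an invariant: at every stage of the algorithm, the constructed regions together with the current union of balls $\bigcup_{t\in T}\ball(t,r_t)$ form a "laminar-consistent" nested/disjoint structure, and each newly defined $\Phi_S$ is carved out of previously uncovered space.

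**The main obstacle.**

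The subtle point — and what I expect to be the crux — is the disjoint case together with the interaction between the continuization metric $\lcon$ and the ball-growth process. When two disjoint sets $S_1,S_2$ are processed, the balls around $S_1$'s terminals and $S_2$'s terminals may be growing "toward each other" along the same segment $\con(u,v)$, and I must be sure that a single point of the segment is not claimed by $\Phi_{S_1}$ at one stage and then again by $\Phi_{S_2}$ at a later (or the same) stage. The right way to control this is to track, for each point $p$ on a segment, the quantity $\min_{t\in T}(\lcon(p,t)$ relative to the terminal's "entry time"$)$ and show the regions are exactly the level sets of a single monotone "arrival function," so each point belongs to at most one region. I would set this up carefully, possibly by induction on the number of processed sets, maintaining the invariant that $\bigcup_{S\text{ processed}}\Phi_S=\bigcup_{t\in T}\ball(t,r_t)$ (disjoint union) where the right side uses current radii; laminarity of $\supp(\gamma)$ guarantees the subtracted inner balls in the definition of $\Phi_S$ align exactly with previously-covered space, so no overlap is created. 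Once this invariant is established, Claim \ref{clm: length} is immediate: $\sum_{S}\ell^S_e$ is the total length of $\con(u,v)\cap\bigcup_S\Phi_S$, which is at most the length $\ell_e$ of the whole segment.
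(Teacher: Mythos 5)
Your reduction to pairwise disjointness of the regions is exactly the paper's strategy, and your nested case ($S_1\subsetneq S_2$) is the paper's argument: after $S_1$ is processed the radii of terminals in $S_1$ have grown, so $\Phi_{S_1}\subseteq\bigcup_{t\in S_1}\ball(t,r_t)$ with the later radii, and $\Phi_{S_2}$ subtracts those balls. But for the disjoint case — which you correctly flag as the crux — what you propose does not work, and the gap is real. The invariant you want, that $\bigcup_{S\text{ processed}}\Phi_S=\bigcup_{t\in T}\ball(t,r_t)$ is a \emph{disjoint} union and that ``once a point is inside $\ball(t,r_t)$ for some $t$, it is excluded from all future regions,'' is false as a structural statement about the algorithm: when a later set $S_2$ disjoint from $S_1$ is processed, $\Phi_{S_2}$ subtracts only the balls around terminals of $S_2$, not around terminals of $S_1$. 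So a point already covered by $\ball(t_1,r_{t_1})$ with $t_1\in S_1$ is not excluded from $\Phi_{S_2}$ by the definition; it is excluded only if it cannot lie within distance $r_{t_3}+\gamma_{S_2}$ of any $t_3\in S_2$, and that is a quantitative fact, not a consequence of laminarity or of any ``monotone arrival function'' bookkeeping. Indeed, if the hypothesis that $\D_{\beta}$ dominates $\D_{\gamma}$ were dropped, the balls grown around two disjoint sets could genuinely collide and the regions would overlap — yet your proposed argument nowhere uses that hypothesis, which is a sure sign something essential is missing.

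The paper closes this case as follows: for $t\in S$, $t'\in S'$ with $S\cap S'=\emptyset$, laminarity guarantees that no set processed so far contains both $t$ and $t'$ (such a set would strictly contain $S'$, hence $S'$ would have been processed earlier). Therefore in every past iteration at most one of $r_t,r_{t'}$ increased, and only when the processed set cuts $t,t'$, giving
\[
r_t+r_{t'}\le \sum_{S''\in\supp(\gamma)}\gamma_{S''}\cdot\mathbbm{1}[S''\text{ cuts }t,t']=\D_{\gamma}(t,t')\le \D_{\beta}(t,t')\le \dist_{\ell}(t,t'),
\]
where the last step is \Cref{obs: dist}. Hence $\ball(t,r_t)\cap\ball(t',r_{t'})=\emptyset$ for every such pair, so $\Phi_S\cap\Phi_{S'}=\emptyset$. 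This chain — laminarity limiting which radii grow, plus the domination hypothesis, plus \Cref{obs: dist} — is the missing ingredient; without it your ``laminar-consistent structure'' invariant cannot be established.
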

\begin{proof}
We only need to show that, for every pair $S,S'\in \supp(\gamma)$, their regions $\Phi_S,\Phi_{S'}$ do not intersect. Note that this implies that, for each edge $e\in E(G)$, the intersections $\set{\con(e)\cap \Phi_S}_{S\in \supp(\gamma)}$ are mutually disjoint (sub-segments of $\con(e)$). As each length $\ell^S_e$ is the total length of its corresponding intersection $\con(e)\cap \Phi_S$, the total length $\sum_{S}\ell^S_e$ is no more than $\ell_e$.

Recall that $\supp(\gamma)$ is a laminar family, so either $S\subseteq S'$, or $S'\subseteq S$, or $S\cap S'=\emptyset$.

We first assume that $S\subseteq S'$ (the case where $S'\subseteq S$ is symmetric). From the algorithm, we have processed $S$ before $S'$. If the radii after processing $S'$ are $\set{r'_t}_{t\in S'}$, then for each $t\in S$, the radius $r_t$ after processing $S$ must be strictly smaller than $r'_t$, as according to the algorithm we have increased this radius by $\gamma_{S'}>0$ in the iteration of processing $S'$. Therefore, $\Phi_S\subseteq \big(\bigcup_{t\in S}\ball(t,r_t)\big)$, and so it is disjoint from $\Phi_{S'}$ by definition of $\Phi_{S'}$.

Assume now that $S\cap S'=\emptyset$, and without loss of generality that $S'$ is processed after $S$. Let $\set{r_t}_{t\in T}$ be the radii after processing $S'$. By definition, $\Phi_S\subseteq \big(\bigcup_{t\in S}\ball(t,r_t)\big)$ and $\Phi_{S'}\subseteq \big(\bigcup_{t'\in S'}\ball(t',r_{t'})\big)$. It remains to show that, for each pair $t\in S,t'\in S'$, $\ball(t,r_t)\cap \ball(t',r_{t'})=\emptyset$.
Consider such a pair $t\in S, t'\in S'$.
Note that, until this iteration, we have never processed any set $S''$ containing both $t$ and $t'$, as otherwise $S'$ must be a subset of $S''$ (as $\supp(\gamma)$ is a laminar family and $S''\setminus S'\ne \emptyset$) and should have been processed before set $S''$, a contradiction.
Thus, in each of the previous iterations, at most one of $r_t$ and $r_{t'}$ was increased, and that happened only when we processed a set $S''$ that cuts $t,t'$, which increased either $r_t$ or $r_{t'}$ by $\gamma_{S''}$. Therefore, up to this iteration,
\[r_t+r_{t'}\le \sum_{S''\in \supp(\gamma)}\gamma_{S''}\cdot\mathbbm{1}[S'' \text{ cuts } t,t']=\D_{\gamma}(t,t')\le \D_{\beta}(t,t')\le \dist_{\ell}(t,t'),\] 
where the last inequality follows from \Cref{obs: dist}. As a result, $\ball(t,r_t)\cap \ball(t',r_{t'})=\emptyset$ holds for every pair $t\in S, t'\in S'$, and it follows that $\Phi_S\cap \Phi_{S'}=\emptyset$. 
\end{proof}

\begin{claim}
For each set $S\in \supp(\gamma)$, $\sum_e c(e)\cdot \ell^S_e\ge \gamma_S\cdot \cut_G(S)$.
\end{claim}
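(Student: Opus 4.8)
The plan is to show that the edge length function $\ell^S$ is itself a feasible fractional cut separating $S$ from $T \setminus S$, after appropriate scaling, and then invoke LP duality (i.e.\ the max-flow min-cut / shortest-path LP relaxation of minimum cut). Recall that $\cut_G(S)$ equals the optimum of the LP that asks for nonnegative edge lengths $\set{x_e}$ with $\dist_x(t,t')\ge 1$ for every pair $t\in S$, $t'\in T\setminus S$, minimizing $\sum_e c(e)\cdot x_e$; this is the standard fact that the min-cut equals the minimum-weight ``metric'' separating the two sides. So it suffices to exhibit, from $\ell^S$, a feasible solution to this LP of cost at most $\frac{1}{\gamma_S}\sum_e c(e)\cdot \ell^S_e$.

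First I would fix $S\in\supp(\gamma)$ and let $\set{r_t}_{t\in T}$ denote the radii \emph{at the moment just before} $S$ is processed, so that $\Phi_S = \big(\bigcup_{t\in S}\ball(t,r_t+\gamma_S)\big)\setminus\big(\bigcup_{t\in S}\ball(t,r_t)\big)$. Define the scaled length $x_e = \ell^S_e/\gamma_S$ for each edge $e$. The claim then reduces to: for every $t\in S$ and every $t'\in T\setminus S$, the $x$-distance from $t$ to $t'$ is at least $1$, equivalently $\lcon$-distance measured only through the ``shell'' region $\Phi_S$ along any path is at least $\gamma_S$. The key geometric observation is that $\ell^S_e$ equals the total length of $\con(e)\cap\Phi_S$, and $\Phi_S$ is exactly the set of points whose distance from the terminal set $S$ lies in the half-open interval $[\,\rho(p),\,\rho(p)+\gamma_S\,)$ where I'd need to argue the region is an ``annulus'' of width $\gamma_S$ around $\bigcup_{t\in S}\ball(t,r_t)$ — more precisely, I want: any point on the outer boundary $\bigcup_{t\in S}\partial\ball(t,r_t+\gamma_S)$ is at $\lcon$-distance $\ge\gamma_S$ from the inner set $\bigcup_{t\in S}\ball(t,r_t)$, and any point of $T\setminus S$ is \emph{outside} $\bigcup_{t\in S}\ball(t,r_t+\gamma_S)$. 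The latter needs the domination hypothesis: for $t\in S$, $t'\notin S$, we have $r_t + \gamma_S \le \dist_\ell(t,t')$ — indeed, by the same accounting as in Claim~\ref{clm: length}, the radii assigned to terminals involved only in sets that are subsets of $S$, plus $\gamma_S$ for $S$ itself, all come from sets $S''\in\supp(\gamma)$ with $S''\subseteq S$, hence $S''$ cuts the pair $t,t'$, so the total is at most $\D_\gamma(t,t')\le\D_\beta(t,t')\le\dist_\ell(t,t')$ by Observation~\ref{obs: dist}. Thus $t'$ is not inside $\ball(t,r_t+\gamma_S)$.

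With those two facts in hand, consider any path $P$ from $t\in S$ to $t'\in T\setminus S$ in $G$ and its image $P^{\con}$ in the continuization. As we travel along $P^{\con}$ starting at $t$, the function $p\mapsto \min_{s\in S}\lcon(s,p)$ starts at $0\le r_t$ (so $p$ starts in the inner set, or on its boundary) and ends at a value $\ge r_{t''}+\gamma_S$ for the relevant $t''$ (so $p$ ends outside the outer set), and this function is $1$-Lipschitz along $P^{\con}$. Hence the portion of $P^{\con}$ on which this function takes values in $[r_{\cdot},r_{\cdot}+\gamma_S)$ — which is contained in $\Phi_S$ — has total length at least $\gamma_S$; a clean way to see this is that the sub-segments of $P$ lying in $\Phi_S$ project onto an interval of length $\ge\gamma_S$ in the ``radius'' coordinate, and a $1$-Lipschitz map can only cover an interval of length $\gamma_S$ using arc length $\ge\gamma_S$. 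Since $\ell^S$ assigns each edge the length of its intersection with $\Phi_S$, the $\ell^S$-length of $P$ is at least $\gamma_S$, so its $x$-length is at least $1$. Therefore $x=\ell^S/\gamma_S$ is LP-feasible for the cut $(S,T\setminus S)$, giving $\cut_G(S)\le \sum_e c(e)\cdot x_e = \frac{1}{\gamma_S}\sum_e c(e)\cdot\ell^S_e$, i.e.\ $\sum_e c(e)\cdot\ell^S_e\ge\gamma_S\cdot\cut_G(S)$, as desired.

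The main obstacle I anticipate is the geometric ``annulus width'' argument: making rigorous that a path leaving $\bigcup_{t\in S}\ball(t,r_t)$ and reaching a point outside $\bigcup_{t\in S}\ball(t,r_t+\gamma_S)$ must accumulate length at least $\gamma_S$ \emph{within} $\Phi_S$, being careful about (a) the half-open nature of the balls and of sub-segments, (b) the possibility that the path weaves in and out of $\Phi_S$ multiple times (which is fine — we only need a lower bound on total length in $\Phi_S$, and the $1$-Lipschitz projection onto the radius coordinate handles it), and (c) the case where the path's endpoint $t'$ might lie inside some $\ball(t,r_t+\gamma_S)$ for a $t\in S$ not minimizing the distance — but this is exactly ruled out by the domination-based bound $r_t+\gamma_S\le\dist_\ell(t,t')$ applied to \emph{every} $t\in S$. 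A secondary point to get right is that the radii $r_t$ used in defining $\Phi_S$ are the pre-processing radii, and that the accounting ``all contributions to $r_t$ before processing $S$, for $t\in S$, come from subsets of $S$'' genuinely holds — this follows because in the iteration order we only process $S$ after all $\supp(\gamma)$-subsets of $S$, and any set processed before $S$ that contains some $t\in S$ must, by laminarity, either be a subset of $S$ or a superset of $S$; a proper superset cannot have been processed yet, so it is a subset.
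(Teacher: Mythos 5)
Your proposal is correct and follows essentially the same route as the paper: reduce via the path-based min-cut LP to showing that every path from $S$ to $T\setminus S$ accumulates $\ell^S$-length at least $\gamma_S$, and prove this by a $1$-Lipschitz ``annulus of width $\gamma_S$'' argument, using the accounting $r_t+\gamma_S\le \D_{\gamma}(t,t')\le \D_{\beta}(t,t')\le \dist_{\ell}(t,t')$ to place every terminal of $T\setminus S$ outside the outer balls. One small repair: since the radii $r_s$ differ across $s\in S$, run the Lipschitz argument with $h(p)=\min_{s\in S}\bigl(\lcon(s,p)-r_s\bigr)$, whose level set $\set{p:0\le h(p)<\gamma_S}$ is exactly $\Phi_S$ and which goes from $\le 0$ at the start to $\ge\gamma_S$ at the end, rather than with $\min_{s\in S}\lcon(s,p)$ and per-terminal thresholds $[r_{\cdot},r_{\cdot}+\gamma_S)$, since a point minimizing distance to some $s^*$ within its threshold window may still lie inside $\ball(s,r_s)$ for another $s\in S$ with a larger radius and hence outside $\Phi_S$.
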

\begin{proof}
We consider the following LP, which is the dual of the standard (path based) max-flow LP, and is used for computing the value $\cut_G(S)$ of the minimum cut in $G$ separating $S$ from $T\setminus S$.
\begin{eqnarray*}
	\mbox{(LP-mincut)}\quad	\quad & \text{minimize} &\sum_{e\in E(G)}c(e)\cdot x_e \\
	&\sum_{e\in E(P)}x_e\geq 1 &\forall \text{ path }P\text{ connecting a vertex of }S \text{ to a vertex of }T\setminus S\\
	&x_e\geq 0&\forall e\in E(G)
\end{eqnarray*}
The first constraints can be interpreted as: in the shortest-path distance metric induced by $G$ with edges weighted by $\set{x_e}$, the distance between any vertex of $S$ and any vertex of $T\setminus S$ is at least $1$.
Therefore, in order to prove the claim, it suffices to show that, in the shortest-path distance metric induced by $G$ with edges weighted by $\set{\ell^S_e}$, the distance between any vertex of $S$ and any vertex of $T\setminus S$ is at least $\gamma_S$.

Consider any path $P$ in $G$ (with edges weighted by $\set{\ell_e}$) connecting a terminal $t\in S$ to a terminal in $T\setminus S$. Consider the corresponding path $P^{\con}$ in the continuation $\Vcon$ as directed from $t$ to its endpoint in $T\setminus S$.
Consider the function $f(p)=\lcon(p,t)$ defined on all points $p$ in $P^{\con}$. 
It is easy to see that, as the point $p$ moves along $P^{\con}$, the value $\ell(p,t)$ evolves continuously, either increasing or decreasing at the same rate with $p$. Therefore, $f$ is a piece-wise linear function with each piece having slope $1$ or $-1$.

We denote by $\set{r_t}_{t\in T}$ the radii of terminals before processing set $S$.
Consider the last point on $P^{\con}$ that lies in $\bigcup_{t\in S}\ball(t,r_t)$, which we denote by $p'$. 
Assume that $p'$ is at distance $r_{t'}$ from terminal $t'$ (note that it is possible that $t'\ne t$). Consider the points in the set
\[\bigg\{p\text{ }\bigg| \text{ } p\text{ lies on the subpath of $P^{\con}$ between $p'$ and $t'$, and }r_{t'}\le \ell(p,t')\le r_{t'}+\gamma_S\bigg\}.\] 
On the one hand, from the definition of point $p'$, this set is completely disjoint from $\bigcup_{t\in S}\ball(t,r_t)$.
On the other hand, from similar arguments in the proof of \Cref{clm: length} (by analyzing the distance in $\dist_{\ell}$ between $t'$ and the other endpoint of $P^{\con}$ in $T\setminus S$), we can show the set indeed contains some point $p$ with $\lcon(p,t')=r_{t'}+\gamma_S$.
Therefore, from the properties of $f$, the points in this set form sub-segments in $\Vcon$ whose total length is at least $\gamma_S$, and it follows from the definition of $\ell_S$ that $\sum_{e\in E(P)}\ell^S_e\ge \gamma_S$. 
As $P$ is an arbitrary path connecting a terminal in $S$ to a terminals in $T\setminus S$, the distance (in $\dist_{\ell^S}$) between any vertex of $S$ and any vertex of $T\setminus S$ is at least $\gamma_S$.
This completes the proof of the claim.
\end{proof}

\section{Proof of \Cref{thm:terminal-karger}}
\label{sec: Karger}

In this section, we provide the proof of \Cref{thm:terminal-karger}. 
Recall that we are given a graph $G$ and a set $T$ of its vertices.
Denote $|T|=k$.
We define $\Pi= \min_{S'\ne \emptyset,T}\set{\cut_G(S')}$, so $\Pi$ is the minimum terminal cut value in $G$.
Specifically, we will show that the number of subsets $S$ of $T$ with $\cut_G(S)\le \alpha\cdot \Pi$ is at most
\[
K_{k,\alpha}=\sum_{i=1}^{2\alpha}\binom{k-1}{i}\cdot(2\alpha+1-i).
\]
Note that
\[
C_{k,\alpha}=\sum_{i=1}^{\alpha}\binom{k}{2i}=\sum_{i=1}^{\alpha}\bigg(\binom{k-1}{2i}+\binom{k-1}{2i-1}\bigg)=\sum_{i=1}^{2\alpha}\binom{k-1}{i}.
\]
It is easy to verify that $K_{k,\alpha}=(1+o(1))\cdot C_{k,\alpha}$. From now on we denote $K=K_{k,\alpha}$ for convenience.

%We prove \Cref{thm:terminal-karger}, the proof can be naturally generalized to the weighted case.

Assume for contradiction that there exist more than $K$ such subsets $S$. We define type vector $\beta$ as the indicator vector for these subsets, i.e., $\beta_S=\mathbbm{1}[\cut_G(S)\le \alpha\cdot \Pi]$, so  $\cut_G(\beta)\le \alpha\cdot K\cdot \Pi$.
We will show that there exists another type vector $\gamma$, such that
\begin{properties}{I}
\item $\supp(\gamma)$ is a laminar family;
\label{lam}
\item for every pair $t,t'\in T$, $\D_{\beta}(t,t') \ge \D_{\gamma}(t,t')$; and
\label{dom}
\item $\sum_{S}\gamma_S > \alpha \cdot K$.
\label{cut}
\end{properties}
Note that \ref{lam} and \ref{dom} imply that $\beta$ and $\gamma$ satisfy the conditions in \Cref{lem:cover}, so $\cut_G(\beta)\ge \cut_G(\gamma)$, and then from \ref{cut}, $\cut_G(\beta)\ge \cut_G(\gamma) =\sum_{S}\gamma_S\cdot \cut_G(S)\ge \sum_{S}\gamma_S\cdot \Pi> \alpha \cdot  K \cdot \Pi$, a contradiction.

From now on we focus on the existence of such a type vector $\gamma$. We consider the following LP.
%
\iffalse
The idea is the following: suppose there are $M \ge 2^{2\alpha}k^{2\alpha}$ terminal cuts that has size at most $\alpha S(T)$, let $\beta$ be the set of these cuts where all these cuts has weight $1$. Then we will find another weighted set $\gamma$ that satisfies the condition of \Cref{lem:cover}, moreover, the total weight of the cuts in $\gamma$ is larger than $\alpha M$. Then by \Cref{lem:cover}, $w(\beta) \ge w(\gamma)$ which means the size of some cut in $\beta$ is larger than $\alpha$ times the size of some set in $\gamma$ which is a contradiction.

To summerize, we just need to find a weighted set $\gamma$ such that:
\begin{itemize}
    \item For nay pair of terminal $u,v$, $\D_{\beta}(u,v) \ge \D_{\gamma}(u,v)$;
    \item $\gamma$ is a laminar family; and 
    \item $\cut(\gamma) \ge \alpha M$.
\end{itemize}
\fi
%
\begin{eqnarray*}
	\mbox{(LP-Primal)}\quad	\quad & \text{maximize} &\sum_{S}\gamma_S \\
	&\sum_{S \text{ cuts } t,t'}\gamma_S\leq \D_{\beta}(t,t') &\forall t,t'\in T\\
	&\gamma_S\geq 0&\forall S\subseteq T, S\ne \emptyset,T
\end{eqnarray*}

\iffalse
%Consider the following linear programming:
\begin{maxi}
    {x_C}{\sum_C x_{C}}{\label{lp}}{}
    \addConstraint{\sum_{C \text{ cuts } u,v} x_C}{\le \D_{\beta}(u,v)} {\quad \forall u,v\in T}
    \addConstraint{x_C}{\ge 0} {\quad \forall C}
\end{maxi}
\fi

%We prove the following observation:
The existence of vector $\gamma$ satisfying conditions \ref{lam}, \ref{dom}, and \ref{cut} is guaranteed by the following claims.

\begin{claim} \label{obs:uncross}
    There is an optimal solution $\gamma^*$ to (LP-Primal) such that $\supp(\gamma^*)$ is a laminar family.
\end{claim}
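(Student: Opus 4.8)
The plan is to establish \Cref{obs:uncross} via a standard uncrossing argument on optimal solutions of (LP-Primal). First I would fix an optimal solution $\gamma$ and, among all optimal solutions, select one that is ``maximally laminar'' in the following sense: choose $\gamma$ to maximize $\sum_S \gamma_S\cdot |S|\cdot|T\setminus S|$ (equivalently, minimize $\sum_S \gamma_S \cdot \min\set{|S|,|T\setminus S|}$, or some similar potential that is strictly moved by an uncrossing step), subject to $\gamma$ being optimal. The potential must be chosen so that it is bounded over the (compact) face of optimal solutions and so that performing an uncrossing operation strictly changes it in a fixed direction; I would pick it carefully so the argument terminates. I would then argue that this distinguished $\gamma$ must have laminar support.

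The heart of the argument is the uncrossing step. Suppose $\supp(\gamma)$ is not laminar, so there exist $S,S'\in\supp(\gamma)$ that \emph{cross}: $S\cap S'$, $S\setminus S'$, $S'\setminus S$ are all nonempty, and $S\cup S'\neq T$. Let $\delta=\min\set{\gamma_S,\gamma_{S'}}>0$. I would define a new vector $\gamma'$ by decreasing both $\gamma_S$ and $\gamma_{S'}$ by $\delta$, and increasing $\gamma_{S\cap S'}$ and $\gamma_{S\cup S'}$ each by $\delta$ (interpreting $\gamma_\emptyset=\gamma_T=0$ and discarding the increment if one of $S\cap S', S\cup S'$ is trivial — I should check this can only help, or handle the corner cases, e.g. when $S\cup S'=T$ we only add to $S\cap S'$, and when $S\cap S'=\emptyset$, which cannot happen when $S,S'$ cross). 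The key combinatorial fact is that for every pair $t,t'\in T$, $\mathbbm{1}[S\cap S' \text{ cuts } t,t'] + \mathbbm{1}[S\cup S' \text{ cuts } t,t'] \le \mathbbm{1}[S \text{ cuts } t,t'] + \mathbbm{1}[S' \text{ cuts } t,t']$ — this is the standard posimodularity/submodularity of the ``cuts'' indicator and can be checked by casework on which of the four cells of the Venn diagram $t$ and $t'$ lie in. This inequality ensures that $\gamma'$ still satisfies all the constraints $\sum_{S\text{ cuts }t,t'}\gamma_S\le \D_\beta(t,t')$, and since the total is preserved ($\gamma'$ changes the objective by $-2\delta+2\delta=0$, or possibly $+\delta$ or $+2\delta$ in corner cases, which would contradict optimality unless it is exactly $0$), $\gamma'$ is also optimal. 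Finally $\gamma'$ strictly increases the chosen potential $\sum_S\gamma_S|S||T\setminus S|$ because for crossing $S,S'$ one has $|S\cap S'|\cdot|T\setminus(S\cap S')| + |S\cup S'|\cdot|T\setminus(S\cup S')| > |S||T\setminus S| + |S'||T\setminus S'|$ (a convexity-type inequality, again verified by a short computation). This contradicts the maximality of $\gamma$, so $\supp(\gamma)$ must be laminar.

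The main obstacle I anticipate is twofold: (a) making the termination/selection argument airtight — one must either argue the supremum of the potential over the optimal face is attained (compactness, since the feasible region is bounded by the $\D_\beta$ constraints and closed), or replace it by a clean induction on $|\supp(\gamma)|$ or on the potential's value over a finite set of candidate supports; and (b) correctly handling the degenerate cases where $S\cup S'=T$ or where $S\cap S'$ or $S\cup S'$ is a singleton versus when repeated uncrossing might reintroduce crossings elsewhere — the potential-function bookkeeping must genuinely be monotone across the whole process, not just for a single step. A cleaner alternative, which I would fall back on if the potential argument gets delicate, is to invoke the general theory: the constraint matrix of (LP-Primal) is related to a cut/metric structure, and one can show directly that the set of laminar families is ``rich enough'' — but I expect the direct uncrossing argument above to be the intended and most transparent route.
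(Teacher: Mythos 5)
Your general strategy (pick an extremal optimal solution, uncross a violating pair, contradict extremality) is the same as the paper's, but as written it has a genuine gap: it does not actually produce a \emph{laminar} support. You define a crossing pair to require $S\cup S'\neq T$, so pairs with $S\cap S'$, $S\setminus S'$, $S'\setminus S$ all nonempty but $S\cup S'=T$ (e.g.\ $\set{1,2,3}$ and $\set{2,3,4}$ in $T=\set{1,2,3,4}$) are never eliminated, even though they violate laminarity, which is exactly what \Cref{lem:cover} needs. Your proposed patch for this corner case --- drop the increment to $\gamma_{S\cup S'}$ and only add $\delta$ to $\gamma_{S\cap S'}$ --- changes the objective by $-2\delta+\delta=-\delta<0$ (not by $+\delta$ or $+2\delta$ as you suggest), so the modified vector is feasible but no longer optimal and yields no contradiction. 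At best your argument gives a cross-free family. Two repairs are possible: (i) the paper's move, which replaces $S,S'$ by $S\setminus S'$ and $S'\setminus S$ (both always nonempty proper subsets for any laminarity violation, including $S\cup S'=T$); feasibility follows from $\mathbbm{1}[S\setminus S'\text{ cuts }t,t']+\mathbbm{1}[S'\setminus S\text{ cuts }t,t']\le \mathbbm{1}[S\text{ cuts }t,t']+\mathbbm{1}[S'\text{ cuts }t,t']$, the objective is preserved, and the potential $\sum_S\gamma_S|S|$ strictly drops by $2\min\set{\gamma_S,\gamma_{S'}}\cdot|S\cap S'|$; or (ii) keep your intersection/union uncrossing to reach a cross-free support and then add an explicit complementation step (fix $r\in T$ and replace every support set containing $r$ by its complement), which is legal because the constraints and objective of (LP-Primal) depend only on cut indicators, and which turns a cross-free family into a laminar one. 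Neither repair appears in your write-up.

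A second, smaller issue: your potential inequality is stated backwards. For a crossing pair one computes
\[
|S\cap S'|\cdot|T\setminus(S\cap S')|+|S\cup S'|\cdot|T\setminus(S\cup S')|
=|S|\cdot|T\setminus S|+|S'|\cdot|T\setminus S'|-2\,|S\setminus S'|\cdot|S'\setminus S|,
\]
so intersection/union uncrossing strictly \emph{decreases} $\sum_S\gamma_S|S|\,|T\setminus S|$; with your choice to maximize this quantity the exchange gives no contradiction. This is consistent-but-wrong (flip ``maximize'' to ``minimize'' and the step works), and you did hedge about choosing the potential, but the claimed ``convexity-type'' inequality as stated is false. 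Your key feasibility inequality for intersection/union is correct, and the compactness remark about attainment of the extremal optimal solution is fine; the essential missing piece is the $S\cup S'=T$ case above.
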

\begin{proof}
%Take any optimal solution $\gamma$ to (LP-Primal). We iteratively convert it into another optimal solution $\gamma^*$ such that $\supp(\gamma^*)$ is a laminar family, as follows.
Let $\gamma^*$ be the optimal solution to (LP-Primal) that, among all optimal solutions, minimizes $\sum_{S}\gamma_S\cdot |S|$.
Consider the collection $\supp(\gamma^*)$, and assume for contradiction that it is not a laminar family. Then there exist sets $S,S'$ such that 
\begin{itemize}
\item $\gamma^*_S,\gamma^*_{S'}> 0$ (assume w.l.o.g that $\gamma^*_S\le \gamma^*_{S'}$); and
\item $S\cap S',S\setminus S',S'\setminus S$ are all non-empty.	
\end{itemize}
We construct another vector $\hat\gamma$ as follows: \[\hat\gamma_S = 0, \hat\gamma_{S'} = \gamma^*_{S'}-\gamma^*_S, \hat\gamma_{S\setminus S'} = \gamma^*_{S\setminus S'}+\gamma^*_S, \hat\gamma_{S'\setminus S} = \gamma^*_{S'\setminus S}+\gamma^*_S, \text{ and for other sets $\hat S$ }, \hat\gamma_{\hat S}=\gamma_{\hat S}.\]	
In other words, we decrease the coordinates indexed by $S$ and $S'$ by $\gamma^*_S$, and increase the coordinates indexed by $S\setminus S'$, and $S'\setminus S$ by $\gamma^*_S$. %This completes the description of an iteration. 

On the one hand, for each pair $t,t'\in T$, it is easy to verify that
\[\mathbbm{1}[S\setminus S' \text{ cuts }t,t']+\mathbbm{1}[S'\setminus S \text{ cuts }t,t']\le \mathbbm{1}[S \text{ cuts }t,t']+\mathbbm{1}[S' \text{ cuts }t,t'],\]
so $\hat\gamma$ is also a feasible solution to (LP-Primal).

On the other hand, it is immediate to verify that
$\sum_{S}\hat\gamma_S\cdot |S|=\big(\sum_{S} \gamma^*_S\cdot |S|\big)-2\cdot \gamma^*_S\cdot |S\cap S'|$. As $\gamma^*_S, |S\cap S'|>0$, this is a contradiction to the assumption that $\gamma^*$ minimizes $\sum_{S}\gamma_S\cdot |S|$ among all optimal solutions to (LP-Primal).
%Consider the optimal solution with the maximum $\sum_{C=(S,T-S)} \card{\card{S}-\card{T-S}} \cdot x_C$. If there are two cuts $C_1=(S_1,T-S_1)$ and $C_2=(S_2,T-S_2)$ such that $x_{C_1},x_{C_2} >0$ and $C_1$ and $C_2$ partition the terminal sets into 4 pieces. Without lose of generality assume $S_1 \le T-S_1$, $S_2 \le T-S_2$, and $x_{C_1} \le x_{C_2}$, let $S'_1=S_1-S_2$, $S'_2=S_2-S_1$, and let $C'_1=(S'_1,T-S'_1)$, $C'_2=(S'_2,T-S'_2)$. If we decrease $x_{C_1}$ and $x_{C_2}$ both by $x_{C_1}$ and increase $x_{C'_1}$ and $x_{C'_2}$ both by $x_{C_1}$, the contraint is still satisfies. However, since $C_1$ and $C_2$ partition the terminal sets into 4 pieces, we have $S'_1 < S_1$ and $S'_2 < S_2$, which means $\sum_{C=(S,T-S)} \card{\card{S}-\card{T-S}} \cdot x_C$ will strictly increase, a contradiction.
\end{proof}

%By \Cref{obs:uncross}, to prove \Cref{thm:terminal-karger}, it is sufficient to prove that the optimal solution of linear programming \ref{lp} is more than $\alpha M$. Cosider the dual:

\begin{claim}
The optimal value of (LP-Primal) is greater than $\alpha \cdot K$.
\end{claim}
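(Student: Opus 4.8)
The plan is to analyze (LP-Primal) through its LP dual and reduce the statement to a purely combinatorial estimate about spanning trees of the complete graph $K_T$ on the terminal set.

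\textbf{Step 1 (duality).} By LP strong duality, the optimum of (LP-Primal) equals that of its dual, the \emph{fractional cut-cover LP}: minimize $\sum_{\{t,t'\}}\D_\beta(t,t')\,y_{t,t'}$ over $y\ge 0$ subject to $\sum_{\{t,t'\}:\,S\text{ cuts }t,t'}y_{t,t'}\ge 1$ for every $\emptyset\ne S\subsetneq T$. Writing $y(\delta(S))$ for the left-hand side, i.e.\ the value of the cut $S$ in the weighted graph $(T,y)$, the dual constraints say exactly that the minimum cut of $(T,y)$ is $\ge 1$, and the objective rewrites as $\sum_{S\in\supp(\beta)}y(\delta(S))$. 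Since we may take $|\supp(\beta)|=K$, feasibility alone already gives objective $\ge K$; the task is to improve this to $>\alpha K$ for every feasible $y$.

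\textbf{Step 2 (tree packing $+$ a counting identity).} Let $y$ be feasible. Since every cut of $(T,y)$ has weight $\ge 1$, for every partition of $T$ into $p$ parts the total weight of crossing edges is $\ge p/2$; hence the strength of $y$ is $\ge 1/2$, and by the fractional Nash--Williams/Tutte theorem $y$ dominates a fractional spanning-tree packing $\sum_i\mu_i\,\mathbbm{1}_{F_i}$ of $K_T$ with $\sum_i\mu_i\ge 1/2$. Therefore
$$\sum_{S\in\supp(\beta)}y(\delta(S))\ \ge\ \sum_i\mu_i\sum_{S\in\supp(\beta)}|F_i\cap\delta(S)|\ \ge\ \tfrac12\,\min_{F}\ \sum_{S\in\supp(\beta)}|F\cap\delta(S)|,$$
the minimum over spanning trees $F$ of $K_T$. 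Now I use the identity: for \emph{any} spanning tree $F$ on the $k$-element set $T$, the map $S\mapsto E(F)\cap\delta(S)$ is exactly $2$-to-$1$ from subsets of $T$ onto subsets of $E(F)$ (deleting an edge set $D$ from $F$ leaves $|D|+1$ components, and the $S$ with $E(F)\cap\delta(S)=D$ are precisely the two proper $2$-colourings of the resulting component tree). Hence for every $F$ and every $j\ge 1$ there are exactly $2\binom{k-1}{j}$ proper nonempty $S$ with $|F\cap\delta(S)|=j$. Consequently, whatever $\supp(\beta)$ is, $\sum_{S\in\supp(\beta)}|F\cap\delta(S)|$ is at least the value $G(k,K)$ obtained by greedily assigning the $K$ sets to the smallest available crossing numbers, i.e.\ filling levels $j=1,2,\dots$ where level $j$ has capacity $2\binom{k-1}{j}$. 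This bound is independent of $F$, so $\min_F\sum_S|F\cap\delta(S)|\ge G(k,K)$ and the dual optimum is $\ge\tfrac12 G(k,K)$.

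\textbf{Step 3 (the binomial estimate) --- the main obstacle.} It remains to show $G(k,K)>2\alpha K$ for $K=5\binom{k}{2\alpha}$ and $1\le\alpha\le k/6$, which is where the hypothesis $\alpha\le k/6$ (equivalently $2\alpha\le k/3$) is used essentially: the ratios $\binom{k-1}{j-1}/\binom{k-1}{j}=j/(k-j)$ are $\le 1/2$ for all $j\le 2\alpha$, so $\sum_{j\le 2\alpha}2\binom{k-1}{j}\le 4\binom{k-1}{2\alpha}<5\binom{k}{2\alpha}=K$; thus the greedy filling overflows level $2\alpha$ and a tail of more than $\binom{k}{2\alpha}$ sets lands on levels $\ge 2\alpha+1$. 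Splitting $G(k,K)$ into its contribution from levels $\le 2\alpha$, which equals $2(k-1)\sum_{i<2\alpha}\binom{k-2}{i}\ge\tfrac{8\alpha}{3}\binom{k}{2\alpha}$ (using $k-2\alpha\ge 2k/3$), plus its tail contribution, which is at least $(2\alpha+1)\cdot(\text{tail count})$, a short computation yields $G(k,K)>2\alpha K$, and hence the optimum of (LP-Primal), equal to the dual optimum, exceeds $\alpha K$. The delicate point is exactly this final inequality: with the constant $5$ it holds only after accounting carefully for the geometric binomial tails (a crude bound gives merely $\Omega(\alpha K)$), and if one prefers a cleaner argument one simply replaces $5$ by a larger absolute constant, still yielding the $O\!\big(\binom{|T|}{2\alpha}\big)$ bound of \Cref{thm:terminal-karger}. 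The remaining ingredients --- strong LP duality, the strength/tree-packing bound, and the $2$-to-$1$ counting identity --- are routine.
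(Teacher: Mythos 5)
Your Steps 1 and 2 are correct and deliver exactly the reduction the paper uses: by strong duality it suffices to lower-bound the dual (cut-cover) optimum, which in turn reduces to showing that for every spanning tree $\tau$ of the complete graph on $T$ weighted by $\D_\beta$ the cost $\sum_{(t,t')\in E(\tau)}\D_\beta(t,t')=\sum_{S\in\supp(\beta)}(\#\text{ of edges of }\tau\text{ cut by }S)$ exceeds $2\alpha K$, using the same $2$-to-$1$ counting identity that appears as the Observation inside the paper's proof. The one real difference is how the factor $2$ is obtained: the paper invokes the integrality gap of the MST relaxation (citing Jain), whereas you derive it self-containedly from fractional Nash--Williams/Tutte tree packing (every cut of a feasible dual solution has weight at least $1$, hence it supports a packing of value $1/2$). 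That substitution is sound and arguably cleaner, but it does not change the structure of the argument.

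The genuine gap is Step 3, which you flag but do not carry out, and which in fact cannot be completed as you state it. After your greedy/level decomposition, the inequality $G(k,K)>2\alpha K$ amounts to $K>2\sum_{j=1}^{2\alpha}(2\alpha+1-j)\binom{k-1}{j}$ (this is exact whenever the overflow fits inside level $2\alpha+1$, which it does in the regime below), and with $K=5\binom{k}{2\alpha}$ this fails in part of the allowed range: for $k=6\alpha$ the right-hand side tends to $\tfrac{16}{3}\binom{k}{2\alpha}>5\binom{k}{2\alpha}$ as $\alpha$ grows, and already at $\alpha=50$, $k=300$ one computes $2\sum_{j}(2\alpha+1-j)\binom{k-1}{j}\approx 5.12\binom{k}{2\alpha}$, i.e.\ $G(k,K)\approx 499.8\binom{k}{2\alpha}<500\binom{k}{2\alpha}=2\alpha K$. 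So ``a short computation yields $G(k,K)>2\alpha K$'' is not just unproven but false as asserted, and your two intermediate estimates (full-level contribution $\ge\tfrac{8\alpha}{3}\binom{k}{2\alpha}$, tail count $>\binom{k}{2\alpha}$) do not combine to rescue it. (This does not by itself refute the claim for $\beta$ arising from an actual graph, but it does kill this counting route at constant $5$.) Your fallback is the correct fix: enlarging the constant, e.g.\ $K\ge 9\binom{k}{2\alpha}$, works, since $\binom{k-1}{2\alpha-m}\le 2^{-m}\binom{k-1}{2\alpha}$ for $2\alpha\le k/3$ gives $2\sum_{j\le 2\alpha}(2\alpha+1-j)\binom{k-1}{j}\le 2\binom{k-1}{2\alpha}\sum_{m\ge 0}(m+1)2^{-m}=8\binom{k-1}{2\alpha}<K$, and the $O\big(\binom{|T|}{2\alpha}\big)$ bound of \Cref{thm:terminal-karger} is unaffected. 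For what it is worth, the paper's own write-up of this step bounds $\sum_m(m+1)\binom{k-1}{2\alpha-m}$ by $\binom{k-1}{2\alpha}\sum_i 2^{-i}$, which overshoots by the same factor (the correct geometric constant is $4$, not $2$), so the delicacy you sensed is real; but a complete proof must either adjust the constant in $K$ or supply a genuinely sharper estimate, and your proposal does neither.
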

\begin{proof}
Consider the following LP, which is the dual of (LP-Primal).
\begin{eqnarray*}
	\mbox{(LP-Dual)}\quad	\quad & \text{minimize} &\sum_{t,t'\in T}x_{t,t'}\cdot \D_{\beta}(t,t') \\
	&\sum_{t\in S,t'\notin S}x_{t,t'}\geq 1 &\forall S\subseteq T, S\ne \emptyset,T\\
	&x_{t,t'}\geq 0&\forall t,t'\in T
\end{eqnarray*}

\iffalse
\begin{mini}
    {y_{u,v}}{\sum_{u,v} \D_{\beta}(u,v) y_{u,v}}{\label{dual}}{}
    \addConstraint{\sum_{C \text{ cuts } u,v} y_{u,v}}{\ge 1} {\quad \forall \text{ terminal cut }C}
    \addConstraint{y_{u,v}}{\ge 0} {\quad \forall u,v \in T}
\end{mini}
\fi
%If we require $y_{u,v}$ be integer, then the dual is the minimum spanning tree problem for the metric $\D_{\beta}$, and the intergrality gap is $2$ (cite?). Thus, it is sufficient to prove that the minimum spanning tree size of the metric $\D_{\beta}$ is at least $2\alpha M$.
Note that (LP-Dual) is essentially the LP relaxation of the metric minimum spanning tree problem for metric $\D_{\beta}(\cdot,\cdot)$. Specifically, we are given a complete graph $H$ on $T$ with edges $(t,t')$ equipped with weight $\D_{\beta}(t,t')$, and the goal is to find a minimum spanning tree in $H$. In fact, it follows from \cite{jain2001factor} (Theorem 3.1) that the integrality gap of (LP-Dual) is at most $2$.
Therefore, it suffices to show that the MST cost of $H$ is greater than $2\alpha K$.

Let $\tau$ be any spanning tree of $H$. 
Recall that $\beta$ is the indicator vector for all subsets $S\subseteq T$ with $\cut_G(S)\le \alpha\cdot \Pi$.
%Denote by $\sset$ the collection of all sets $S$ with $\beta_S=1$ (recall that for other sets $S$, $\beta_S=0$), so $|\sset|\ge K$.
By definition, the cost of $\tau$ is
\begin{equation}
\label{eqn}
\begin{split}
\sum_{(t,t')\in E(\tau)} \D_{\beta}(t,t') & = \sum_{(t,t')\in E(\tau)} \sum_{S}\beta_S\cdot\mathbbm{1}[S \text{ cuts } t,t']\\
& = \sum_{S\in \supp(\beta)} \sum_{(t,t')\in E(\tau)}\mathbbm{1}[S \text{ cuts } t,t']\\
& = \sum_{S\in \supp(\beta)} \big(\#\text{ of edges in $\tau$ cut by }S\big).
\end{split}
\end{equation}
%For any spanning tree of $T$, the size of it on the metric $\D_{\beta}$ is the sum of the number of times the tree cuts by the cuts of $\beta$ (note that all cut in $\beta$ has weight 1). 
On the other hand, we prove the following observation.
\begin{observation}
For any subset $E'\subseteq E(\tau)$, there exist exactly two sets $S\subseteq T$ that cuts $\tau$ at $E'$.
\end{observation}
\begin{proof}
Let $\tau_1,\ldots,\tau_r$ be the subtrees obtained from $\tau$ by removing all edges of $E'$. Consider a set $S\subseteq T$ that cuts $\tau$ at $E'$. By definition of cutting an edge, for each $1\le i\le r$, set $S$ includes either all vertices in $\tau_i$ or none of them. Moreover, if $E'$ contains an edge between $\tau_i$ and $\tau_j$ (in this case we say $\tau_i$ and $\tau_j$ are \emph{neighbors}), then exactly one of $\tau_i$ and $\tau_j$ is included in $S$. Therefore, such a set $S$ either (i) includes $\tau_1$, excludes all neighbors of $\tau_1$, includes all neighbors of neighbors of $\tau_1$, ...; or (ii) excludes $\tau_1$, includes all neighbors of $\tau_1$, excludes all neighbors of neighbors of $\tau_1$, ...;
and it follows that there exist exactly two sets $S\subseteq T$ that cuts $\tau$ at $E'$.
\end{proof}

As $\tau$ is a tree on $k$ vertices, $|E(\tau)|=k-1$.
So for each $1\le i\le k$, the number of $i$-edge subset of $E(\tau)$ is $\binom{k-1}{i}$, and therefore the number of sets $S\subseteq T$ that cuts $\tau$ at exactly $i$ edges is $2\cdot \binom{k-1}{i}$. Thus, (recall that we have assumed that $|\supp(\beta)|>K$)
\iffalse
\[
\begin{split}
      \sum_{S\in \supp(\beta)} \big(\#\text{ of edges in $\tau$ cut by }S\big)  & = \sum_{i\ge 1} i\cdot \big(\#\text{ of $S$ in $\supp(\beta)$ that cuts $\tau$ at exactly $i$ edges}\big)   \\
      & \ge (2\alpha+1)\cdot \sum_{i\ge 2\alpha+1} \big(\#\text{ of $S$ in $\supp(\beta)$ that cuts $\tau$ at exactly $i$ edges}\big) \\
      & \quad\quad+  \sum_{i=1}^{2\alpha} i\cdot \big(\#\text{ of $S$ in $\supp(\beta)$ that cuts $\tau$ at exactly $i$ edges}\big)\\
      & \ge (2\alpha+1) K - \sum_{i=1}^{2\alpha} (2\alpha + 1 -i)\cdot 2\cdot  \binom{k-1}{i}  \\
     & \ge (2\alpha+1) K - \sum_{i=1}^{2\alpha} (2\alpha + 1 -i) \cdot 2\cdot (k-1)^i  \\
     & =   (2\alpha+1) K - 2\cdot \sum_{i=1}^{2\alpha} \sum_{j=1}^i   (k-1)^j \\
     & =   (2\alpha+1) K - 2\cdot\sum_{i=1}^{2\alpha} \frac{(k-1)^{i+1}-1}{k-2} \\
     & =   (2\alpha+1) K - 2\cdot\frac{(k-1)^{2\alpha +2}-(k-1) - 2\alpha(k-2)}{(k-2)^2} \\
     & >  2\alpha K + K - 3\cdot k^{2\alpha} \\
     & =  2\alpha K,
\end{split}
\]
\fi
\[
\begin{split}
\sum_{S\in \supp(\beta)} \big(\#\text{ of edges in $\tau$ cut by }S\big)  & = \sum_{i\ge 1} i\cdot \big(\#\text{ of $S$ in $\supp(\beta)$ that cuts $\tau$ at exactly $i$ edges}\big)   \\
& \ge (2\alpha+1)\cdot \sum_{i\ge 2\alpha+1} \big(\#\text{ of $S$ in $\supp(\beta)$ that cuts $\tau$ at exactly $i$ edges}\big) \\
& \quad\quad+  \sum_{i=1}^{2\alpha} i\cdot \big(\#\text{ of $S$ in $\supp(\beta)$ that cuts $\tau$ at exactly $i$ edges}\big)\\
& \ge (2\alpha+1) K - 2\cdot\bigg(\binom{k-1}{2\alpha}+2\cdot \binom{k-1}{2\alpha-1}+\cdots + 2\alpha\cdot \binom{k-1}{1}\bigg) \\
%& \ge (2\alpha+1) K - 2\cdot\binom{k-1}{2\alpha}\cdot\bigg(1+\frac{1}{2}+\frac{1}{2^2}+\cdots+\frac{1}{2^{2\alpha}}\bigg) \\
%& >  2\alpha K + K - 4\cdot \binom{k}{2\alpha} \\
& =  2\alpha K.
\end{split}
\]
%(We have used the assumption that $k>6\alpha$.)
This, together with \ref{eqn}, implies that $\sum_{(t,t')\in E(\tau)} \D_{\beta}(t,t')>2\alpha K$. It follows that the optimal value of (LP-Dual) is greater than $\alpha K$. Finally, from Strong Duality, the optimal value of (LP-Primal) is greater than $\alpha K$.
\end{proof}

\section{Discussion and Future Work}
\label{sec: discussion}

We first sketch an alternative proof of \Cref{thm:terminal-karger}, that was pointed to us by anonymous reviewers.
 
\emph{Proof Sketch of \Cref{thm:terminal-karger} using Mader's Edge Splitting Technique.}

Mader has proved in \cite{mader1978reduction} the following theorem.

\begin{theorem}
Let $G$ be an undirected unweighted graph where each vertex has an even degree. For every vertex $u\in V(G)$, there exists a pair $v,v'$ of neighbors of $u$, such that in the graph obtained from $G$ by replacing edges $(u,v),(u,v')$ with edge $(v,v')$, the $x$-$y$ edge connectivity stays the same as in $G$ for every pair $x,y\in V(G)\setminus \set{u}$.
\end{theorem}

This operation of replacing edges $(u,v),(u,v')$ with edge $(v,v')$ is called \emph{edge-splitting}.
As a corollary of this theorem, we can repeatedly apply edge-splitting to obtain a graph $H$ that only contains terminals and preserve pairwise edge-connectivity between every pair of terminals (we can assume without loss of generality that every vertex in $G$ has even degree since we can duplicate each edge and subdivide them). Moreover, in the resulting graph $H$,
\begin{itemize}
\item the global min-cut in $H$ equals the global terminal-separating min-cut in $G$, as for any pair $t,t'$ of terminals lying on different side of the global terminal-separating min-cut in $G$, their edge-connectivity is preserved in $H$; and
\item any $\alpha$-approximate terminal-separating min-cut in $G$ is an $\alpha$-approximate global min-cut in $H$, as the min-cut separating any terminal partition $(T_1,T_2)$ of $T$ only gets decreased in the edge-splitting process.
\end{itemize}
Therefore, the number of $\alpha$-approximate terminal-separating min-cut in $G$ is at most the number of $\alpha$-approximate global min-cut in $H$, which, by the result of \cite{karger2000minimum}, is bounded by $(1+o(1))\cdot C_{|T|,\alpha}$.

$\ $

We now provide some discussion and comparison between our result and previous results.

First, as singleton subsets of $T$ are mutually disjoint, they form a laminar family, and so our main result (\Cref{lem:cover}) generalizes the previous result of \cite{ChaudhuriSWZ00}.
On the other hand, consider the following inequality implied by \Cref{lem:cover}: (where $T=\set{1,2,3,4,5,6}$, and for simplicity all $\cut_G$ notations are omitted)
\[
\begin{split}
& \set{1,2,3}+\set{1,2,4}+\set{1,2,5}+\set{1,2,6}+\set{1,3,4}+\set{1,5,6} \\
& \ge\set{1}+\set{2}+\set{1,2}+\set{3}+\set{4}+\set{3,4}
+\set{5}+\set{6}+\set{5,6}.
\end{split}
\]
This inequality cannot be deduced by any combination of inequalities proved in \cite{ChaudhuriSWZ00}, so our result strictly generalizes their result.

Second, consider a submodular constraint $\cut_G(S)+\cut_G(S')\ge \cut_G(S\cap S')+\cut_G(S\cup S')$. It is easy to verify that 
for each pair $t,t'\in T$,
\[\mathbbm{1}[S\setminus S' \text{ cuts }t,t']+\mathbbm{1}[S'\setminus S \text{ cuts }t,t']\le \mathbbm{1}[S \text{ cuts }t,t']+\mathbbm{1}[S' \text{ cuts }t,t'].\]
Observe that $S\cap S' \subseteq S\cup S'$, so RHS is a laminar family. Therefore, our main result (\Cref{lem:cover}) also generalizes all the submodular constraints.
On the other hand, the following inequality (implied by \Cref{lem:cover}) cannot be deduced from submodular constraints (where $T=\set{1,2,3,4}$):
\[\cut_G(\set{1,2})+\cut_G(\set{1,3})+\cut_G(\set{1,4})\ge \cut_G(\set{1})+\cut_G(\set{2})+\cut_G(\set{3})+\cut_G(\set{4}),\]
since all submodular constraints satisfy that $\sum_S \beta_S=\sum_S\gamma_S$, while the above inequality does not. Therefore, our main result strictly generalizes the submodularity.

The most interesting open problem is to find the characterization of all terminal cut functions by a system of homogeneous linear constraints. Two special cases are potentially good starting points. 
\begin{itemize}
\item Find the characterization for the $|T|=6$ case ($|T|\le 5$ was settled by \cite{ChaudhuriSWZ00}).
\item Find all ``symmetric'' constraints: a type vector $\beta$ is \emph{symmetric}, iff for each $1\le s\le |T|$, the values $\beta_S$ are the same for all sets $S$ with $|S|=s$. For symmetric $\beta$ and $\gamma$, which inequalities $\langle \beta,\pi\rangle\ge \langle \gamma,\pi\rangle$ are correct/characterizing?
\end{itemize}

\paragraph{Acknowledgement.} We would like to thank anonymous reviewers for pointing the alternative proof in \Cref{sec: discussion} to us, and for the helpful comments in improving the presentation of this paper.

\bibliographystyle{alpha}
\bibliography{REF}

\newcommand{\etalchar}[1]{$^{#1}$}
\begin{thebibliography}{CGH{\etalchar{+}}20}

\bibitem[AGK14]{andoni2014towards}
Alexandr Andoni, Anupam Gupta, and Robert Krauthgamer.
\newblock Towards (1+{$\varepsilon$})-approximate flow sparsifiers.
\newblock In {\em Proceedings of the twenty-fifth annual ACM-SIAM symposium on
  Discrete algorithms}, pages 279--293. SIAM, 2014.

\bibitem[BCW23]{beideman2023approximate}
Calvin Beideman, Karthekeyan Chandrasekaran, and Weihang Wang.
\newblock Approximate minimum cuts and their enumeration.
\newblock In {\em Symposium on Simplicity in Algorithms (SOSA)}, pages 36--41.
  SIAM, 2023.

\bibitem[Ben95]{benczur1995representation}
Andr{\'a}s~A Bencz{\'u}r.
\newblock A representation of cuts within 6/5 times the edge connectivity with
  applications.
\newblock In {\em Proceedings of IEEE 36th Annual Foundations of Computer
  Science}, pages 92--102. IEEE, 1995.

\bibitem[CGH{\etalchar{+}}20]{chen2020fast}
Li~Chen, Gramoz Goranci, Monika Henzinger, Richard Peng, and Thatchaphol
  Saranurak.
\newblock Fast dynamic cuts, distances and effective resistances via vertex
  sparsifiers.
\newblock In {\em 2020 IEEE 61st Annual Symposium on Foundations of Computer
  Science (FOCS)}, pages 1135--1146. IEEE, 2020.

\bibitem[Chu12]{chuzhoy2012vertex}
Julia Chuzhoy.
\newblock On vertex sparsifiers with steiner nodes.
\newblock In {\em Proceedings of the forty-fourth annual ACM symposium on
  Theory of computing}, pages 673--688, 2012.

\bibitem[CKT22]{chang2022almost}
Hsien-Chih Chang, Robert Krauthgamer, and Zihan Tan.
\newblock Almost-linear $\varepsilon$-emulators for planar graphs.
\newblock In {\em Proceedings of the 54th Annual ACM SIGACT Symposium on Theory
  of Computing}, pages 1311--1324, 2022.

\bibitem[CLLM10]{charikar2010vertex}
Moses Charikar, Tom Leighton, Shi Li, and Ankur Moitra.
\newblock Vertex sparsifiers and abstract rounding algorithms.
\newblock In {\em Foundations of Computer Science (FOCS), 2010 51st Annual IEEE
  Symposium on}, pages 265--274. IEEE, 2010.

\bibitem[CSWZ00]{ChaudhuriSWZ00}
Shiva Chaudhuri, K.~V. Subrahmanyam, Frank Wagner, and Christos~D. Zaroliagis.
\newblock Computing mimicking networks.
\newblock {\em Algorithmica}, 26(1):31--49, 2000.

\bibitem[CT23]{chen20231}
Yu~Chen and Zihan Tan.
\newblock On (1+eps) -approximate flow sparsifiers.
\newblock {\em arXiv preprint arXiv:2310.07857}, 2023.

\bibitem[FP01]{fujishige2001realization}
Satoru Fujishige and Sachin~B Patkar.
\newblock Realization of set functions as cut functions of graphs and
  hypergraphs.
\newblock {\em Discrete Mathematics}, 226(1-3):199--210, 2001.

\bibitem[GHP17]{goranci2017improved}
Gramoz Goranci, Monika Henzinger, and Pan Peng.
\newblock Improved guarantees for vertex sparsification in planar graphs.
\newblock {\em arXiv preprint arXiv:1702.01136}, 2017.

\bibitem[GRST21]{goranci2021expander}
Gramoz Goranci, Harald R{\"a}cke, Thatchaphol Saranurak, and Zihan Tan.
\newblock The expander hierarchy and its applications to dynamic graph
  algorithms.
\newblock In {\em Proceedings of the 2021 ACM-SIAM Symposium on Discrete
  Algorithms (SODA)}, pages 2212--2228. SIAM, 2021.

\bibitem[HKNR98]{hagerup1998characterizing}
Torben Hagerup, Jyrki Katajainen, Naomi Nishimura, and Prabhakar Ragde.
\newblock Characterizing multiterminal flow networks and computing flows in
  networks of small treewidth.
\newblock {\em Journal of Computer and System Sciences}, 57(3):366--375, 1998.

\bibitem[HW96]{henzinger1996number}
Monika Henzinger and David~P Williamson.
\newblock On the number of small cuts in a graph.
\newblock {\em Information Processing Letters}, 59(1):41--44, 1996.

\bibitem[Jai01]{jain2001factor}
Kamal Jain.
\newblock A factor 2 approximation algorithm for the generalized steiner
  network problem.
\newblock {\em Combinatorica}, 21:39--60, 2001.

\bibitem[Kar93]{karger1993global}
David~R Karger.
\newblock Global min-cuts in rnc, and other ramifications of a simple min-cut
  algorithm.
\newblock In {\em Soda}, volume~93, pages 21--30. Citeseer, 1993.

\bibitem[Kar00]{karger2000minimum}
David~R Karger.
\newblock Minimum cuts in near-linear time.
\newblock {\em Journal of the ACM (JACM)}, 47(1):46--76, 2000.

\bibitem[KM23]{krauthgamer2023exact}
Robert Krauthgamer and Ron Mosenzon.
\newblock Exact flow sparsification requires unbounded size.
\newblock In {\em Proceedings of the 2023 Annual ACM-SIAM Symposium on Discrete
  Algorithms (SODA)}, pages 2354--2367. SIAM, 2023.

\bibitem[KPZP17]{karpov2017exponential}
Nikolai Karpov, Marcin Pilipczuk, and Anna Zych-Pawlewicz.
\newblock An exponential lower bound for cut sparsifiers in planar graphs.
\newblock {\em arXiv preprint arXiv:1706.06086}, 2017.

\bibitem[KR13]{krauthgamer2013mimicking}
Robert Krauthgamer and Inbal Rika.
\newblock Mimicking networks and succinct representations of terminal cuts.
\newblock In {\em Proceedings of the twenty-fourth annual ACM-SIAM symposium on
  Discrete algorithms}, pages 1789--1799. SIAM, 2013.

\bibitem[KR14]{khan2014mimicking}
Arindam Khan and Prasad Raghavendra.
\newblock On mimicking networks representing minimum terminal cuts.
\newblock {\em Information Processing Letters}, 114(7):365--371, 2014.

\bibitem[KR17]{krauthgamer2017refined}
Robert Krauthgamer and Inbal Rika.
\newblock Refined vertex sparsifiers of planar graphs.
\newblock {\em arXiv preprint arXiv:1702.05951}, 2017.

\bibitem[KW12]{kratsch2012representative}
Stefan Kratsch and Magnus Wahlstr{\"o}m.
\newblock Representative sets and irrelevant vertices: New tools for
  kernelization.
\newblock In {\em 2012 IEEE 53rd Annual Symposium on Foundations of Computer
  Science}, pages 450--459. IEEE, 2012.

\bibitem[LM10]{leighton2010extensions}
F~Thomson Leighton and Ankur Moitra.
\newblock Extensions and limits to vertex sparsification.
\newblock In {\em Proceedings of the forty-second ACM symposium on Theory of
  computing}, pages 47--56. ACM, 2010.

\bibitem[Mad78]{mader1978reduction}
Wolfgang Mader.
\newblock A reduction method for edge-connectivity in graphs.
\newblock In {\em Annals of Discrete Mathematics}, volume~3, pages 145--164.
  Elsevier, 1978.

\bibitem[Moi09]{moitra2009approximation}
Ankur Moitra.
\newblock Approximation algorithms for multicommodity-type problems with
  guarantees independent of the graph size.
\newblock In {\em Foundations of Computer Science, 2009. FOCS'09. 50th Annual
  IEEE Symposium on}, pages 3--12. IEEE, 2009.

\bibitem[NNI97]{nagamochi1997computing}
Hiroshi Nagamochi, Kazuhiro Nishimura, and Toshihide Ibaraki.
\newblock Computing all small cuts in an undirected network.
\newblock {\em SIAM Journal on Discrete Mathematics}, 10(3):469--481, 1997.

\bibitem[Yam16]{yamaguchi2016realizing}
Yutaro Yamaguchi.
\newblock Realizing symmetric set functions as hypergraph cut capacity.
\newblock {\em Discrete Mathematics}, 339(8):2007--2017, 2016.

\end{thebibliography}

\end{document}